\crefname{subsection}{Subsection}{Subsections} %The \newtheorem commands have to come after the loading of cleveref. Additionally, the cleverref package has to be loaded after hyperref and ntheorem/amsthm.
\newtheoremstyle{estiloteorema}%                % Name
  {\topsep}%                                     % Space above
  {\topsep}%                                     % Space below
  {\em}%                             		% Body font
  {}%                                     % Indent amount
  {\bfseries}%                            % Theorem head font
  {.}%                                    % Punctuation after theorem head
  { }%                                    % Space after theorem head, ' ', or \newline
  {\thmname{#1}\thmnumber{ #2}\thmnote{ (#3)}}  % Theorem head spec (can be left empty, meaning `normal')
\theoremstyle{estiloteorema}
\newtheorem{theorem}{Theorem}[section]
\newtheorem{proposition}[theorem]{Proposition}
\newtheorem{lemma}[theorem]{Lemma}
\newtheorem{corollary}[theorem]{Corollary}
\newtheorem{remark}[theorem]{Remark}
\theoremstyle{definition}
\newtheorem{notation}[theorem]{Notation}
\newtheorem{example}[theorem]{Example}
\newlist{propenum}{enumerate}{1}
\setlist[propenum]{label=\roman*), ref=\theproposition(\roman*)}
\newlist{corolenum}{enumerate}{1}
\setlist[corolenum]{label=\roman*), ref=\thecorollary(\roman*)}
\newcommand{\F}{\mathbb{F}}
\newcommand{\Fq}{\mathbb{F}_q}
\newcommand{\Fqr}[1]{\mathbb{F}_{q^{#1}}}
\DeclareMathOperator{\lcm}{lcm}
\DeclareMathOperator{\gal}{Gal}
\DeclareMathOperator{\End}{End}
\DeclareMathOperator{\Hom}{Hom}
\DeclareMathOperator{\Stab}{Stab}
\newcommand{\rsp}{\operatorname{rowsp}}
\newcommand{\rk}{\operatorname{rk}}
\newcommand{\mx}{\mathrm{x}}
\def\cC{ {\cal C} }
\newcommand{\then}{\Longrightarrow}
\begin{document}

\title{Codes in algebras of direct products of groups\thanks{This study forms part of the Quantum Communication programme and was supported by MCIN with funding from European Union NextGenerationEU (PRTR-C17.I1) and by Generalitat Valenciana  COMCUANTICA/008. This work is also partially supported by the Ministerio de Ciencia e Innovaci\'on project PID2022-142159OB-I00. The second author is supported by the Generalitat Valenciana project CIAICO/2022/167. The first and second authors are supported by the I+D+i projects VIGROB23-287 and UADIF23-132 of Universitat d'Alacant. The last author was supported by Ayuda a Primeros Proyectos de Investigación (PAID-06-23) from Vicerrectorado de Investigación de la Universitat Politècnica de València (UPV).}
}

\author{\renewcommand\thefootnote{\arabic{footnote}} 
  Miguel Sales-Cabrera\footnotemark[1],
 \renewcommand\thefootnote{\arabic{footnote}} 
  Xaro Soler-Escriv\`{a}\footnotemark[1],
  \renewcommand\thefootnote{\arabic{footnote}} 
  V\'ictor Sotomayor\footnotemark[2]}

\footnotetext[1]{Departament de Matem\`{a}tiques, Universitat d'Alacant. 
  Ap.\ Correus 99, E-03080, Alacant (Spain).}

\footnotetext[2]{Departamento de Álgebra, Facultad de Ciencias, Universidad de Granada, Av. Fuente Nueva s/n, 18071 Granada (Spain). \\  E-mail adresses: \texttt{miguel.sales@ua.es, xaro.soler@ua.es, vsotomayor@ugr.es}}

{\small \date{\today}} 

\maketitle

\begin{abstract}
In this paper we obtain the Wedderburn-Artin decomposition of a semisimple group algebra associated to a direct product of finite groups. We also provide formulae for the number of all possible group codes, and their dimensions, that can be constructed in a group algebra. As particular cases, we present the complete algebraic description of the group algebra of any direct product of groups whose direct factors are cyclic, dihedral, or generalised quaternion groups. 

\bigskip

\noindent \textbf{Keywords:} Linear codes, Group algebras, Direct products of groups

\noindent \textbf{MSC 2020:} 94B05, 16S34, 16D25
\end{abstract}

\section{Introduction}

Let $\Fq$ be a finite field of $q$ elements, where $q$ is a prime power. Given a finite group $G$ of order $n$, a (left) group code, or simply a $G$-code, of length $n$ is a linear code of $\Fq^n$ which is the image of a (left) ideal of the group algebra $\Fq[G]$ via an isomorphism which maps $G$ to the standard basis of $\Fq^n$. In the setting of linear coding theory, a considerable amount of codes, as generalised Reed-Solomon codes, Reed-Muller codes, and many other optimal codes, have been shown to be group codes (see for instance \cite{bernal2009intrinsical, borello2021dihedral, landrock1992, mcloughlin2008} and the references therein). The interest in the study of group codes is clearly linked to their powerful algebraic structure, which allows valuable information about the parameters of the code to be obtained, as well as to provide efficient coding and decoding algorithms (cf. \cite{martinez2023}).

Most of the research that has been done on group codes deals with the case where the group $G$ is abelian. However, in recent years the study of the non-abelian case has been gaining an increasing interest (\cite{borello2021dihedral,Gao2021,Gao2020}, to name a few). There are several reasons for this. First, non-commutativity could possibly improve the security of code-based encryption protocols, which are one of the few mathematical techniques that enables the construction of public-key cryptosystems that are secure against an adversary equipped with a quantum computer (cf. \cite{Sendrier}). Second, a non-abelian group algebra has a richer algebraic structure, so we can construct linear codes that cannot be obtained using abelian groups (see \cite{gonzalez2019group} for instance).

Let us assume hereafter that the characteristic of the field and the group order are relatively prime. Hence the group algebra $\mathbb{F}_q[G]$ is semisimple by a celebrated theorem due to Maschke. Moreover, as a consequence of the Wedderburn-Artin Theorem and the Wedderburn Little Theorem, $\Fq[G]$ is isomorphic to a direct sum of some matrix rings over finite extensions of $\Fq$ (cf. \cite{doerk1992}). Thus the ideals of $\mathbb{F}_q[G]$ are principal and can be seen as sum of matrix ideals over finite fields. It turns out that all possible $G$-codes can be determined whenever the Wedderburn-Artin decomposition of the group algebra $\mathbb{F}_q[G]$ and some specific ideal generators are known. In particular, these generators can be realised as matrices over finite fields. In the last decade, this has been done for certain non-abelian groups, as (generalised) dihedral, (generalised) quaternion, metacyclic, symmetric and alternating groups (cf. \cite{Brochero2015, Gao2020, Gao2021, Brochero2022, Ricardo2023}). Besides, the authors of \cite{Vedenev2019_2} considered the direct product of two dihedral groups $D_{n}\times D_{m}$ of order $2n$ and $2m$, respectively, such that $m$ divides $q-1$.

The aim of this paper is to take a step further in the aforementioned research line. More concretely, we provide the Wedderburn-Artin decomposition of the semisimple algebra of any direct product of finite groups $G_1\times \cdots \times G_r$ based on the structure of the corresponding group algebras of the direct factors $G_1, \ldots, G_r$. This information will be utilised to compute the associated group codes, the number of such codes, and their dimensions. It is worthwhile to mention that, in contrast to \cite{Vedenev2019_2}, our study does not depend on either the number of direct factors or the dihedral structure in such a way that some of their results appear now as particular cases. As a direct consequence, we get the full description of semisimple group algebras of direct products of dihedral, cyclic, or generalised quaternion groups. We illustrate with examples that some linear codes that achieve the best-known minimum distance for their dimensions can be seen as group codes that arise from semisimple algebras of direct products of groups.

The paper is organised as follows. In Section \ref{sec_prel} we collect some preliminary results concerning tensor products and the Wedderburn-Artin decomposition of semisimple group algebras. Later we recall some basic facts concerning group codes and we present our main contributions. More specifically, in Section \ref{sec_groups} we provide a formula for the number of group codes that can be constructed from the Wedderburn-Artin's decomposition of a given group algebra, as well as a formula for their dimensions. After that, the general structure of a semisimple algebra associated to a direct product of finite groups is determined in Section \ref{sec_direct}. As a consequence, it is obtained the full description of the group algebra associated to the direct product of either a dihedral group and a cyclic group, two dihedral groups, or a dihedral group and a generalised quaternion group. In Section \ref{sec_examples}, we show applications of the theoretical results stated in this paper; to be more precise, we construct several group codes arising from algebras of direct products of the mentioned groups. Moreover, we briefly discuss about how group codes arised from a known Wedderburn-Artin's decomposition can be utilised to construct quantum codes in a systematic way.

\section{Preliminaries}
\label{sec_prel}

All groups considered in this paper are supposed to be finite. We denote by $\Fq$ the finite field of $q$ elements, where $q$ is a power of a prime $p$, whereas $\mathbb{K}$ will denote an arbitrary field. Given an arbitrary ring $R$, we write $M_{n}(R)$ for the ring of $(n\times n)$-matrices over $R$. If $X$ is a non-empty subset of $R$, then $\langle X\rangle_R$ is the ideal of $R$ generated by $X$ (we will simply write $\langle X\rangle$ when the ambient ring $R$ is clear enough).  Moreover, all algebras and rings considered in this paper are associative and unitary. The remaining unexplained notation and terminology are standard in the context of coding theory and group theory.

\subsection{On tensor products}
Below, several properties about tensor products that will be of importance for the development of this paper are listed. They can be found in many books covering tensor products of algebras, such as \cite{Bourbaki1973, Adkins1992, doerk1992}.

Recall that given a commutative ring $R$, an $R$-algebra $A$ is a ring which is also an $R$-module. In particular, in what follows, all $R$-algebras are always unitary rings. Moreover, $A$ will be a commutative $R$-algebra if $A$ is a commutative ring.  An homomorphism of $R$-algebras is an $R$-linear ring homomorphism.
Given two $R$-algebras $A$ and $B$, we will denote by $A\otimes_R B$ its tensor product over $R$. Defining the product on elements of the form $a\otimes_R b$ by $(a_1\otimes_R b_1)(a_2\otimes_R b_2)=a_1a_2\otimes_R b_1b_2$, it turns out that $A\otimes_R B$ is an $R$-algebra too.  Moreover,  the tensor product of $R$-algebras is associative and it is commutative whenever $A$,  $B$ and $R$ are commutative.   
Next we list some other properties on tensor product of $R$-algebras that will be needed in the sequel.  As usual, we will denote by $\oplus$  the direct sum of $R$-modules.

\color{black}
\begin{lemma}\label{tpproperties}
Let $R$ be a commutative ring.
\begin{propenum}
    %\item\label{tpproperties:product} If $m_1 \otimes_R n_1, m_2 \otimes_R n_2 \in M \otimes_R N$, then $(m_1 \otimes_R n_1)(m_2 \otimes_R n_2) = (m_1m_2) \otimes_R (n_1n_2)$.

    \item Assume that  $M_i, N_i$ are $R$-algebras such that $M_i \cong N_i$ as $R$-algebras, for $i=1,2$. Then $M_1 \otimes_R M_2 \cong N_1 \otimes_R N_2$ as $R$-algebras.
    
    \item\label{tpproperties:directsum} Given two $R$-algebras $M$ and $N$ such that  $M \cong \bigoplus_{i \in I} M_i$ and $N \cong \bigoplus_{j \in J} N_j$ as $R$-algebras, for some families of $R$-algebras $\{M_i\ |\ i\in I\}$ and $\{N_j\ |\ j\in J\}$, then the following isomorphism of $R$-algebras holds: $$M \otimes_R N \cong \bigoplus_{(i, j) \in I \times J} \left( M_i \otimes_R N_j\right).$$
    
    \item\label{tpproperties:function} Given $f: M_1 \longrightarrow N_1$ and $g: M_2 \longrightarrow N_2$ two $R$-algebra homomorphisms, the map 
    \[
    \begin{array}{cccc}
    f \otimes_R g: & M_1 \otimes_R M_2 & \longrightarrow  & N_1 \otimes_R N_2\\ 
    & a \otimes_R b & \longmapsto &f(a) \otimes_R g(b)
    \end{array}  
    \]
    defines an homomorphism of $R$-algebras. As a consequence, we obtain the canonical homomorphism of $R$-modules   
    %$f \otimes_R g: M_1 \otimes_R M_2 \rightarrow N_1 \otimes_R N_2$ as the unique homomorphism such that $(f \otimes_R g)(a \otimes_R b) = f(a) \otimes_R g(b), \, \forall (a, b) \in M_1 \times M_2$. This gives rise to 
    \begin{equation}\label{Hom_lambda}
        \lambda: \Hom_R(M_1, N_1) \otimes_R \Hom_R(M_2, N_2) \longrightarrow \Hom_R(M_1 \otimes_R M_2, N_1 \otimes_R N_2).
    \end{equation}
\end{propenum}
\end{lemma}

In the particular case when $R$ is a field, the homomorphism $\lambda$ given in \eqref{Hom_lambda} can be used to produce the following well-known result concerning the tensor product of matrices over a field. Although its proof is elementary, since we were unable to find it stated in this way, then we include it here for the sake of completeness.

Given a $\mathbb{K}$-algebra $A$, we will denote by $A^{\oplus n}$ the direct sum of $n$ copies of $A$. The set of all $A$-linear endomorphisms of $A^{\oplus n}$ will be denoted as  $\End_{A}\left(A^{\oplus n}\right)$. Notice that $\End_{A}\left(A^{\oplus n}\right)$ is, in particular, an $A$-algebra.

\begin{lemma}\label{tpmatrices}
    Let $A_1, A_2$ be two commutative algebras over a field $\mathbb{K}$. Then the following isomorphism of $\mathbb{K}$-algebras holds: 
    \[
    M_{n_1}\left(A_1\right) \otimes_{\mathbb{K}} M_{n_2}\left(A_2\right) \cong M_{n_1n_2} \left( A_1 \otimes_{\mathbb{K}} A_2 \right).
    \]
\end{lemma}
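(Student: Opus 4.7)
The plan is to reduce the statement to the well-known identity $M_{n_1}(\mathbb{K})\otimes_{\mathbb{K}} M_{n_2}(\mathbb{K})\cong M_{n_1n_2}(\mathbb{K})$ by first extracting the coefficient algebras $A_1,A_2$ from the matrix rings. The key intermediate fact I would establish is that, for any commutative $\mathbb{K}$-algebra $A$, there is a $\mathbb{K}$-algebra isomorphism
\[
 A\otimes_{\mathbb{K}} M_n(\mathbb{K})\;\cong\; M_n(A),
\]
given on elementary tensors by $a\otimes(c_{ij})\mapsto (ac_{ij})$. This is $\mathbb{K}$-bilinear, bijective (the matrix units $E_{ij}$ form a $\mathbb{K}$-basis of $M_n(\mathbb{K})$, so the map is clearly an isomorphism of $\mathbb{K}$-modules), and multiplicative precisely because $A$ is commutative, so the scalars from $A$ slide past the matrix entries.

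Next I would produce the field-case isomorphism $M_{n_1}(\mathbb{K})\otimes_{\mathbb{K}} M_{n_2}(\mathbb{K})\cong M_{n_1n_2}(\mathbb{K})$ using the canonical homomorphism $\lambda$ from \cref{tpproperties:function}. Identifying $M_{n_i}(\mathbb{K})\cong\End_{\mathbb{K}}(\mathbb{K}^{n_i})$, the map
\[
 \lambda:\End_{\mathbb{K}}(\mathbb{K}^{n_1})\otimes_{\mathbb{K}}\End_{\mathbb{K}}(\mathbb{K}^{n_2})\longrightarrow \End_{\mathbb{K}}(\mathbb{K}^{n_1}\otimes_{\mathbb{K}}\mathbb{K}^{n_2})\cong\End_{\mathbb{K}}(\mathbb{K}^{n_1n_2})
\]
is a $\mathbb{K}$-algebra homomorphism by \cref{tpproperties:function}, injective because $\End_{\mathbb{K}}(\mathbb{K}^{n_1})$ is free over $\mathbb{K}$, and between spaces of equal finite $\mathbb{K}$-dimension $n_1^2 n_2^2=(n_1n_2)^2$, hence an isomorphism.

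With these two ingredients in hand, the conclusion is a chain of isomorphisms obtained by applying \cref{tpproperties} (preservation under isomorphism and the associativity/commutativity of $\otimes_{\mathbb{K}}$ for commutative $\mathbb{K}$-algebras):
\[
\begin{aligned}
 M_{n_1}(A_1)\otimes_{\mathbb{K}} M_{n_2}(A_2)
 &\cong\bigl(A_1\otimes_{\mathbb{K}} M_{n_1}(\mathbb{K})\bigr)\otimes_{\mathbb{K}}\bigl(A_2\otimes_{\mathbb{K}} M_{n_2}(\mathbb{K})\bigr)\\
 &\cong (A_1\otimes_{\mathbb{K}} A_2)\otimes_{\mathbb{K}}\bigl(M_{n_1}(\mathbb{K})\otimes_{\mathbb{K}} M_{n_2}(\mathbb{K})\bigr)\\
 &\cong (A_1\otimes_{\mathbb{K}} A_2)\otimes_{\mathbb{K}} M_{n_1n_2}(\mathbb{K})\\
 &\cong M_{n_1n_2}(A_1\otimes_{\mathbb{K}} A_2),
\end{aligned}
\]
where the last step applies the preliminary fact with $A=A_1\otimes_{\mathbb{K}} A_2$, a commutative $\mathbb{K}$-algebra because both $A_1$ and $A_2$ are.

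The main obstacle I anticipate is purely bookkeeping: verifying that the isomorphism $A\otimes_{\mathbb{K}} M_n(\mathbb{K})\cong M_n(A)$ is genuinely an isomorphism of \emph{$\mathbb{K}$-algebras}, since the multiplicativity depends crucially on the commutativity of $A$ (to ensure that the elements of $A$ commute with the scalar entries of matrices), and checking that $\lambda$ in the field case is an isomorphism rather than merely a homomorphism, for which the finiteness of the $\mathbb{K}$-dimensions is essential. Once these two points are settled, the assembly is mechanical.
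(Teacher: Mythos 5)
Your proof is correct, but it takes a genuinely different route from the paper. You reduce everything to two elementary facts: the base-change isomorphism $A\otimes_{\mathbb{K}}M_n(\mathbb{K})\cong M_n(A)$ and the classical Kronecker-product identity $M_{n_1}(\mathbb{K})\otimes_{\mathbb{K}}M_{n_2}(\mathbb{K})\cong M_{n_1n_2}(\mathbb{K})$, and then assemble them by associativity and the twist isomorphism of tensor products. The paper instead works at the level of endomorphism algebras: it identifies $M_{n_i}(A_i)\cong\End_{A_i}\left(A_i^{\oplus n_i}\right)$, invokes a lemma of Bourbaki to see that the canonical map $\lambda$ of \eqref{Hom_lambda} gives a $\mathbb{K}$-vector space isomorphism onto $\End_{A_1\otimes A_2}\left(A_1^{\oplus n_1}\otimes_{\mathbb{K}}A_2^{\oplus n_2}\right)$, checks multiplicativity by hand, and finishes with $A_1^{\oplus n_1}\otimes_{\mathbb{K}}A_2^{\oplus n_2}\cong(A_1\otimes_{\mathbb{K}}A_2)^{\oplus n_1n_2}$. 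Your version is more self-contained (no external Bourbaki citation) and in fact shows the commutativity hypothesis is inessential: the map $a\otimes(c_{ij})\mapsto(ac_{ij})$ is multiplicative because the matrix entries lie in $\mathbb{K}$, which is central in any $\mathbb{K}$-algebra, not because $A$ is commutative; the paper's endomorphism-algebra argument is the one that scales more directly to module-theoretic generalisations. Two small points to tidy up: the step swapping $M_{n_1}(\mathbb{K})$ past $A_2$ uses the twist isomorphism $B\otimes_{\mathbb{K}}C\cong C\otimes_{\mathbb{K}}B$ with a non-commutative factor, which is true in general but is not literally covered by the commutativity statement preceding \cref{tpproperties} (that statement is only for commutative algebras), so you should justify the twist directly; and the multiplicativity of $\lambda$ in the field case does not follow formally from \cref{tpproperties:function} alone — it is the short composition computation $(f_1\otimes f_2)\circ(g_1\otimes g_2)=(f_1\circ g_1)\otimes(f_2\circ g_2)$ that the paper writes out, which you should include (or note that $\lambda$ sends the basis $E_{ij}\otimes E_{kl}$ to a basis of Kronecker products, giving bijectivity and multiplicativity at once).
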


\begin{proof}
    Since $\mathbb{K}$ can be embedded into $A_1$ and $A_2$, it turns out that $\End_{A_i}\left(A_i^{\oplus n_i}\right)$ can be realised as a $\mathbb{K}$-subalgebra of $\End_{\mathbb{K}}\left(A_i^{\oplus n_i}\right)$,  for $i=1, 2$. Therefore, according to \cite[Lemma 1, p.~214]{Bourbaki2023}, the homomorphism $\lambda$ defined in \eqref{Hom_lambda} induces the isomorphism of $\mathbb{K}$-vector spaces
    \[
    \varphi:  \End_{A_1}\left(A_1^{\oplus n_1}\right) \otimes_{\mathbb{K}} \End_{A_2}\left(A_2^{\oplus n_2}\right) \longrightarrow  \End_{A_1 \otimes A_2}\left(A_1^{\oplus n_1} \otimes_{\mathbb{K}} A_2^{\oplus n_2}\right).
    \]
    Let us see that $\varphi$ is a ring homomorphism too. 
    Let $f_1, g_1 \in \End_{A_1}\left(A_1^{\oplus n_1}\right)$ and $f_2, g_2 \in \End_{A_2}\left(A_2^{\oplus n_2}\right)$. For any $a \in A_1^{\oplus n_1}$ and $b \in A_2^{\oplus n_2}$, applying \cref{tpproperties:function}, one gets% 
    \[\arraycolsep=2pt \def\arraystretch{1.5}
    \begin{array}{rcccl}
       \left((f_1 \otimes_{\mathbb{K}} f_2)\circ (g_1 \otimes_{\mathbb{K}} g_2)\right)(a \otimes_{\mathbb{K}} b) & = & (f_1 \otimes_{\mathbb{K}} f_2)(g_1(a) \otimes_{\mathbb{K}} g_2(b)) & = & \\
       & = & f_1(g_1(a)) \otimes_{\mathbb{K}} f_2(g_2(b)) & = & \\
        & = & (f_1 \circ g_1)(a) \otimes_{\mathbb{K}} (f_2 \circ g_2)(b) & = &((f_1 \circ g_1) \otimes_{\mathbb{K}} (f_2 \circ g_2))(a \otimes_{\mathbb{K}} b),
    \end{array}
    %(f_1 \otimes_R f_2)(g_1 \otimes_R g_2)(a \otimes_R b) = (f_1 \otimes_R f_2)(g_1(a) \otimes_R g_2(b)) = (f_1 \circ g_1)(a) \otimes_R (f_2 \circ g_2)(b) = ((f_1 \circ g_1) \otimes_R (f_2 \circ g_2))(a \otimes_R b)
    \]
    so $\varphi$ is in fact a $\mathbb{K}$-algebra isomorphism. Now, taking into account that $\End_{A_i}\left(A_i^{\oplus n_i}\right) \cong M_{n_i}\left(A_i\right)$ for $i=1, 2$ (see \cite[Chapter 4, Corollary 3.9, p.~219]{Adkins1992}), 
    %\todomiguel{El resultat tamb? es pot trobar a \cite[Chapter II, Section 10.7, p.~349]{Bourbaki1973}. No est? a cap teorema, est? al mig del text.} 
    and that $A_1^{\oplus n_1} \otimes_{\mathbb{K}} A_2^{\oplus n_2} \cong \left(A_1 \otimes_{\mathbb{K}} A_2\right)^{\oplus n_1n_2}$ (\cref{tpproperties:directsum}), the result follows.
\end{proof}

The next explicit description of the tensor product of two finite fields as a direct sum of fields will be also needed for our purposes. We also give here an elementary proof for the sake of completeness, although it can be deduced, for instance, from \cite{Cohn}.

Recall that, given %a prime number $p$ and 
an extension of finite fields $\Fqr{t}/\Fq$, its Galois group is a cyclic group of order $t$, namely, $\gal(\Fqr{t}/\Fq) = \langle \sigma \rangle$, where $\sigma(x) = x^{q}$, for all $x\in \Fqr{t}$.

\begin{proposition}\label{tpfields}
Let $\Fqr{n}$ and $\Fqr{m}$ be two finite fields, and denote $d = \gcd(n, m)$ and $\ell = \lcm(n, m)$. Then, the tensor product $\Fqr{n} \otimes_{\Fq} \Fqr{m}$ is isomorphic to the direct sum of $d$ copies of the field $\Fqr{\ell}$, that is, 
\begin{equation}\label{eqtpfields}
    \Fqr{n} \otimes_{\Fq} \Fqr{m}\cong \left(\Fqr{\ell}\right)^{\oplus d}.
\end{equation}

\end{proposition}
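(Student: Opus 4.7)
The plan is to reduce the tensor product to a quotient of a polynomial ring over $\Fqr{m}$, and then to analyse the factorisation of a suitable irreducible polynomial via the Chinese Remainder Theorem.

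First I would fix a generator $\alpha$ of the multiplicative group of $\Fqr{n}$ (or any primitive element for the extension $\Fqr{n}/\Fq$), and let $p(\mx)\in\Fq[\mx]$ be its minimal polynomial, which is irreducible of degree $n$. Then $\Fqr{n}\cong \Fq[\mx]/\langle p(\mx)\rangle$ as $\Fq$-algebras, and using that tensor product is compatible with quotients and with extension of scalars (\cref{tpproperties}), one obtains the chain of $\Fqr{m}$-algebra isomorphisms
\[
\Fqr{n}\otimes_{\Fq}\Fqr{m}\;\cong\;\bigl(\Fq[\mx]/\langle p(\mx)\rangle\bigr)\otimes_{\Fq}\Fqr{m}\;\cong\;\Fqr{m}[\mx]/\langle p(\mx)\rangle.
\]

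The heart of the proof is to show that $p(\mx)$ factorises over $\Fqr{m}$ into exactly $d$ distinct monic irreducible factors, each of degree $\ell/m=n/d$. For this, pick any root $\beta$ of $p$ in an algebraic closure. Then $\Fq(\beta)=\Fqr{n}$ and therefore $\Fqr{m}(\beta)$ contains both $\Fqr{m}$ and $\Fqr{n}$; since a compositum of finite fields with the same prime field is determined by the lcm of the degrees, one has $\Fqr{m}(\beta)=\Fqr{\ell}$. Hence the minimal polynomial of $\beta$ over $\Fqr{m}$ has degree $[\Fqr{\ell}:\Fqr{m}]=\ell/m=n/d$. This reasoning applies to every root of $p$, so every irreducible factor of $p$ over $\Fqr{m}$ has degree $n/d$; since $p$ has degree $n$ and is separable (finite fields are perfect), the number of such factors is exactly $d$.

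With the factorisation $p=p_1\cdots p_d$ in hand, I would finish by applying the Chinese Remainder Theorem to pairwise coprime irreducibles:
\[
\Fqr{m}[\mx]/\langle p(\mx)\rangle\;\cong\;\bigoplus_{i=1}^{d}\Fqr{m}[\mx]/\langle p_i(\mx)\rangle\;\cong\;\bigoplus_{i=1}^{d}\Fqr{\ell},
\]
where the last step uses that each $\Fqr{m}[\mx]/\langle p_i(\mx)\rangle$ is a field extension of $\Fqr{m}$ of degree $\ell/m$, hence isomorphic to $\Fqr{\ell}$. Combining this with the first display gives the desired isomorphism \eqref{eqtpfields}.

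The main obstacle is the intermediate degree computation, namely identifying $\Fqr{m}(\beta)$ with $\Fqr{\ell}$ and concluding that all irreducible factors of $p$ over $\Fqr{m}$ share the common degree $\ell/m$; everything else is a standard application of CRT and of the structure of simple algebraic extensions of $\Fq$. Separability, which guarantees that the $p_i$'s are pairwise distinct, is automatic for finite fields but should be mentioned so that the CRT step is justified.
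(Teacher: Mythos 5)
Your proof is correct, but it follows a genuinely different route from the one in the paper. You reduce everything to the base-change isomorphism $\Fqr{n}\otimes_{\Fq}\Fqr{m}\cong\Fqr{m}[\mx]/\langle p(\mx)\rangle$, then factor $p$ over $\Fqr{m}$ into $d$ distinct irreducibles of degree $\ell/m$ (via the compositum argument $\Fqr{m}(\beta)=\Fqr{\ell}$ and separability), and finish with the Chinese Remainder Theorem. The paper instead works directly with the tensor product: it factors the minimal polynomial of a generator of $\Fqr{m}$ over $\Fqr{n}$ by analysing the orbits of $\langle\sigma^n\rangle\leq\gal(\Fqr{\ell}/\Fq)$ on its roots, then writes down an explicit $\Fq$-algebra homomorphism $\nu(\alpha^i\otimes\beta^j)=(\alpha^i\omega_1^j,\dots,\alpha^i\omega_d^j)$ built from orbit representatives $\omega_k=\beta^{q^{k-1}}$, and proves it is an isomorphism by an injectivity argument (a nonzero kernel element would give a polynomial of degree at most $m-1$ divisible by $p$) plus a dimension count. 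The two proofs hinge on the same arithmetic fact (the degree of each irreducible factor is $\ell$ divided by the degree of the base field, with $d$ factors in total, so your $\ell/m=n/d$ matches the paper's $d_1=\ell/n$ with the roles of $n$ and $m$ swapped); your CRT route is more structural and avoids verifying a hand-built map, while the paper's route has the advantage of producing the isomorphism explicitly, which is what it later exploits computationally (e.g.\ when determining $\psi$ on generators in Section \ref{sec_examples}). One small point: the base-change step $\bigl(\Fq[\mx]/\langle p\rangle\bigr)\otimes_{\Fq}\Fqr{m}\cong\Fqr{m}[\mx]/\langle p\rangle$ is not literally among the properties listed in Proposition \ref{tpproperties}, so you should either cite it as the standard compatibility of tensor products with quotients and polynomial rings or verify it directly; it is routine, so this is a presentational matter rather than a gap.
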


\begin{proof}
    Let $\alpha, \beta$ such that $\Fqr{n} = \Fq(\alpha)$ and $\Fqr{m} = \Fq(\beta)$ and take into account the following field extension diagram:

\begin{center}
% https://tikzcd.yichuanshen.de/#N4Igdg9gJgpgziAXAbVABwnAlgFyxMJZARgBoAGAXVJADcBDAGwFcYkQAdDgMQEcAnYFxiNGAXwAEAXglc+ACi5M0AC3qlZHAEYwc9AJQgxpdJlz5CKcqWLU6TVuzkDgAW0kzni7boNGTIBjYeAREAEw2dgwsbIicPC5gHpoKSoyqfsamwRZEZGFRDrHxfIJQYv7Z5qEoZADMhTFOCUZ2MFAA5vBEoABm-BCuSNYgOBBIEfZNcVAA+sQgNIz0OowACmYhliCMML04iyAqMPRQ7JBgbFkg-YPDNGNIdUsrIhs5NTt7BzTHp+cENhLLCXdhwCCMLBnB70LCMAGg663IaIEaPRBkKaOGazMKHZard7Vbb8LAdFQ-I4nM5xC5XALIpCY9HPHavdabXJxUnkyl-GngQGHaLYkBzBZIgYoyYsl6EzmfXb7Q78hFArHFOZ4yV3RCs9EAFhhcLVKuppp1KOZ40QRqp-1pQrlbwVJLJFOFRXYyGcgnciGclGkEiGNB0YBpAFo6uRLRMHja7arHaCaCLij6EoIkgGEkGZIQwzAI09Y5QxEA
\begin{tikzcd}
& {\Fqr{\ell} = \Fq(\alpha, \beta)} \arrow[rd, "d_1", no head] \arrow[dd, no head] \arrow[ld, "d_2"', no head] & \\
\Fqr{m} = \Fq(\beta) \arrow[rd, "d_1"', no head] \arrow[rdd, "{[\Fqr{m}:\Fq] = m}"', no head, bend right] & & \Fqr{n} = \Fq(\alpha) \arrow[ld, "d_2", no head] \arrow[ldd, "{[\Fqr{n}:\Fq] = n}", no head, bend left] \\
& \Fqr{d} \arrow[d, "d", no head] & \\ & \Fq &
\end{tikzcd}
\end{center}
Notice that $\ell d=nm$ and there exist integers $d_1,d_2$ such that $\ell=d_1 n=d_2 m$, $n=d_2d$, $m=d_1d$. 

If we consider the Galois group $\gal(\Fqr{\ell} / \Fq) = \langle \sigma \rangle$ of the extension  $\Fqr{\ell} / \Fq$, then $\gal(\Fqr{\ell} / \Fqr{n}) = \langle \sigma^n\rangle$.
Let $p(\mx) \in \Fq[\mx]$ be the minimal polynomial of $\beta$ over $\Fq$. We are going to consider the action of $\langle \sigma^n\rangle$ on the set of roots of $p(\mx)$, which is $R_p = \left\{ \beta, \beta^{q}, \beta^{q^2}, \dots, \beta^{q^{m-1}}\right\}\subseteq \Fqr{m}$. Since the order of $\sigma^n$ is $d_1$, its action on $R_p$ will provide $d$ orbits of $d_1$ elements each. To see this, first observe that, $\langle \sigma^m \rangle\leq \Stab_{\langle\sigma\rangle}(\omega)$, for any root  $\omega$ of $p(\mx)$. Thus, taking $k_1, k_2 \in \mathbb{Z}$ such that $d=nk_1+mk_2$, one has
\[
\omega^{q^d} = \sigma^d(\omega) = \sigma^{nk_1+mk_2}(\omega) = \sigma^{nk_1}\left(\sigma^{mk_2}(\omega)\right) = \sigma^{nk_1}(\omega)\in \mathcal{O}_{\langle \sigma^n \rangle}(\omega).
\]
Therefore, we obtain $d_1$ different elements in this orbit and, as a result, we conclude that  
\[
 \left\{ \omega^{q^{rd}}: 0 \leqslant r \leqslant d_1-1 \right\} = \mathcal{O}_{\langle \sigma^n \rangle}(\omega)= \left\{ \sigma^{rn}(\omega): 0 \leqslant r \leqslant d_1-1 \right\}.
\]

Let $\omega_k = \beta^{q^{k-1}} \in \Fqr{\ell}$
for $k=1, \dots, d$. Then $R_p=\bigcup_{k=1}^d \mathcal{O}_{\langle \sigma^n \rangle}(\omega_k)$ and we can write $p(\mx) = \prod_{k=1}^d p_k(\mx)$, where every $p_k(\mx) \in \Fqr{n}[\mx]$ is irreducible over $\Fqr{n}$, has degree $d_1$ and roots $R_{p_k}=\mathcal{O}_{\langle \sigma^n \rangle}(\omega_k)$.

%We denote by $\sigma: \Fqr{\ell} \rightarrow \Fqr{\ell}$ the automorphism given by $\sigma(x) = x^q,  x \in \Fqr{\ell}$. By \cref{galoisgroupfinite}, we deduce that $\gal(\Fqr{\ell} / \Fq) = \langle \sigma \rangle$ and  $\gal(\Fqr{\ell} / \Fqr{n}) = \langle \sigma^n\rangle$. 
 
 %and $k_1, k_2 \in \mathbb{Z}$ such that $d=nk_1+mk_2$. Taking into account that $\Stab_{\langle\sigma\rangle}(\omega) = \langle \sigma^m \rangle$, we have that:

%That is, the action of $\gal(\Fqr{\ell} / \Fqr{n}) = \langle \sigma^n\rangle$ on the roots of $p(\mx)$ generates $d$ orbits of $d_1$ elements each. Hence, $p(\mx) = \prod_{k=1}^d p_k(\mx)$ where every $p_k(\mx) \in \Fqr{n}[\mx]$ is irreducible over $\Fqr{n}$ and has degree $d_1$. Let $\omega_k = \beta^{q^{k-1}} \in \Fqr{\ell}$ \todoxaro{no veig a??:  $\omega_k$ est? en $\Fqr{m}$} \todomiguel{$\omega_k$ ?s pot?ncia de $\beta$, que per hip?tesi pertany a $\Fqr{m}$} for $k=1, \dots, d$ be roots of $p_1(\mx), p_2(\mx), \dots, p_d(\mx)$ respectively.

Now, consider the following map
% \begin{align*}
%   \nu \colon \Fqr{n} \otimes \Fqr{m} & \to d \, \Fqr{\ell}\\
%   \alpha^i \otimes \beta^j &\mapsto \alpha^i \big( \omega_1^j, \dots, \omega_d^j \big).
% \end{align*}
\[\arraycolsep=2pt
\begin{array}{rccl}
  \nu: & \Fqr{n} \otimes_{\Fq} \Fqr{m} & \longrightarrow & \left(\Fqr{\ell}\right)^{\oplus d}\\[1mm]
  & \alpha^i \otimes_{\Fq} \beta^j & \longmapsto & \alpha^i \big( \omega_1^j, \dots, \omega_d^j \big)=\big( \alpha^i \omega_1^j, \dots, \alpha^i \omega_d^j \big).
\end{array}
\]
Since $\left\{ \alpha^i \otimes \beta^j: 0 \leqslant i \leqslant n-1, 0 \leqslant j \leqslant m-1 \right\}$ is an $\Fq$-basis of $\Fqr{n} \otimes_{\Fq} \Fqr{m}$, extending by $\Fq$-linearity, one has that $\nu$ defines an  $\Fq$-vector space homomorphism. Moreover, it can be easily checked that $\nu$ respects multiplication. Thus, $\nu$ is an  $\Fq$-algebra homomorphism. We will show that it is indeed an isomorphism, and the proof will be complete.
%$\nu[(\alpha^{i_1} \otimes \beta^{j_1})(\alpha^{i_2} \otimes \beta^{j_2})] = \nu[(\alpha^{i_1} \otimes \beta^{j_1})]\nu[(\alpha^{i_2} \otimes \beta^{j_2})]$

Let $u = \sum_{i, j} \lambda_{ij} (\alpha^i \otimes \beta^j) \in \ker(\nu)$, where $\lambda_{ij} \in \Fq$. Then
\begin{equation}\label{eqdemotensor}
    \nu(u) = \sum_{i, j} \lambda_{ij} \alpha^i \big( \omega_1^j, \dots, \omega_d^j \big) = (0,\dots, 0).
\end{equation}

Consider the polynomial $P_\alpha(\mx) = \sum_{i,j} \lambda_{ij} \alpha^i \mx^j \in \Fqr{n}[\mx]$, which satisfies $\deg(P_\alpha(\mx)) \leqslant m-1$. By \eqref{eqdemotensor}, we obtain that $P_\alpha(\omega_k) = 0$, for all $k \in \{1, \dots, d\}$. Therefore, every $p_k(\mx)$ must divide $P_\alpha(\mx)$, and so must its product $\prod_{k=1}^d p_k(\mx) = p(\mx)$. If $P_\alpha(\mx)$ is not the zero polynomial, then the following contradiction would be reached: $m = \deg(p(\mx)) \leqslant \deg(P_\alpha(\mx)) \leqslant m-1$. Thus, $P_\alpha(\mx)$ must be the zero polynomial and then $\sum_{i=1}^{n-1} \lambda_{ij} \alpha^i = 0 $, for all $j \in \{0, \dots, m-1\}$. Since $\{ \alpha^i : 0 \leqslant i \leqslant n-1\}$ is an $\Fq$-basis of $\Fqr{n}$, it can only happen that $\lambda_{ij} = 0$, for all $i \in \{0, \dots, n-1\}$ and $j \in \{0, \dots, m-1\}$. This means that $\ker(\nu) = \{0\}$ , that is, $\nu$ is injective.

In addition, $\dim_{\Fq}(\Fqr{n} \otimes_{\Fq} \Fqr{m}) = \dim_{\Fq}(\Fqr{n})\dim_{\Fq}(\Fqr{m}) = nm=ld=\dim_{\Fq}\left(\left(\Fqr{\ell}\right)^{\oplus d}\right)$. Consequently, $\nu$ is an isomorphism of $\Fq$-algebras.
\end{proof}

\subsection{On the decomposition of a group algebra}\label{sec:decomposition}
\label{group-algebra-decomposition}

Given a group $G$, the set of all formal $\mathbb{F}_q$-linear combinations of elements of $G$, \emph{i.e.}
\[
\Fq[G]=\left\{\displaystyle\sum_{g\in G} \alpha_g g\ |\ \alpha_g\in \Fq\ \right\},
\]
is an $\Fq$-vector space with basis the elements of $G$. Moreover, by considering the multiplication
\[
\left(\sum_{g\in G}\alpha_g g \right)\cdot \left(\sum_{g\in G}\beta_g g\right)=\sum_{g\in G}\left(\sum_{h\in G}\alpha_h\beta_{h^{-1}g}\right) g, 
\]
we obtain an $\Fq$-algebra which is called the \emph{group algebra} of $G$ over $\Fq$.

In this paper we will always deal with group algebras $\Fq[G]$ that are semisimple, that is, they can be realised as a direct sum of simple $\Fq[G]$-modules. Recall that Maschke's theorem states that $\Fq[G]$ is a semisimple $\Fq$-algebra if and only if the characteristic of the field does not divide the order of $G$ (cf. \cite{doerk1992,Bourbaki2023}).
 This is the reason why, from now on, the order of the considered groups $G$ is not divisible by the characteristic of the field $\Fq$.  Moreover, when we talk about ideals of $\Fq[G]$ we always assume that they are left ideals. 
 
Semisimple $\Fq$-algebras have many important properties. For instance, assuming that $\Fq[G]$ is semisimple, every ideal of $\Fq[G]$ is always a direct summand. As a consequence, every ideal of the algebra must be principal. Moreover, the algebra can be realised as a direct sum of matrix algebras. This is the well-known Wedderburn-Artin theorem,  which is stated below for finite group algebras (see \cite[Theorem 4.4, p.~112]{doerk1992}).

\begin{theorem}[Wedderburn-Artin decomposition for finite group algebras]\label{Wedderburn}
    Let $G$ be a finite group such that $\Fq[G]$ is a semisimple group algebra. Then $\Fq[G]$ is isomorphic, as $\Fq$-algebra, to the direct sum of some matrix rings over suitable extensions of $\Fq$. Specifically, one has: 
    \[
        \Fq[G] \cong \bigoplus_{i=1}^s M_{n_i} \left( \Fqr{r_i} \right)
    \]
    satisfying $|G| = \sum_{i=1}^{s} n_i^2 r_i$.
\end{theorem}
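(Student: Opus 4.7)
My plan is to derive the statement from two classical results that are already flagged in the surrounding text: the Wedderburn--Artin structure theorem for finite-dimensional semisimple algebras and Wedderburn's little theorem on finite division rings. Combined, these reduce the proof almost entirely to a dimension count.

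First I would apply the general Wedderburn--Artin theorem to the finite-dimensional semisimple $\Fq$-algebra $\Fq[G]$ to obtain, for some positive integer $s$, an isomorphism of $\Fq$-algebras
\[
\Fq[G]\cong \bigoplus_{i=1}^{s} M_{n_i}(D_i),
\]
where each $D_i$ is a division ring (arising, via Schur's lemma, as the endomorphism algebra of the $i$-th simple $\Fq[G]$-module). Since $\Fq[G]$ has finite cardinality, so does each block $M_{n_i}(D_i)$, and therefore each $D_i$ is a finite division ring. Wedderburn's little theorem then forces every $D_i$ to be a (commutative) field.

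Next I would show that each $D_i$ is canonically an extension of $\Fq$. Scalar multiplication by $\Fq$ on the left-hand side is central, so its image under the isomorphism lies in the centre of each summand; because $Z(M_{n_i}(D_i))=D_i\cdot I_{n_i}$, this yields a ring embedding $\Fq\hookrightarrow D_i$. Since finite extensions of $\Fq$ are classified up to isomorphism by their degree, one has $D_i\cong \Fqr{r_i}$ for a unique positive integer $r_i$, giving the claimed form of the decomposition.

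Finally, the identity $|G|=\sum_{i=1}^{s} n_i^{2} r_i$ is obtained by comparing $\Fq$-dimensions on both sides: on the left $\dim_{\Fq}\Fq[G]=|G|$, and on the right $\dim_{\Fq} M_{n_i}(\Fqr{r_i})=n_i^{2}r_i$ for each $i$. The hard part is essentially not hard: the substantive mathematical content is entirely absorbed by quoting Wedderburn--Artin and Wedderburn's little theorem. The only point that genuinely deserves attention, and that I would make explicit, is the centrality argument used to turn the abstract division-ring factors $D_i$ into finite extensions of the ground field $\Fq$; once this is in place, the dimension count closes the argument.
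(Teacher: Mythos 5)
Your argument is correct and is exactly the route the paper itself indicates: the paper states this theorem without proof, citing \cite{doerk1992} and noting in the introduction that it follows from the Wedderburn--Artin theorem together with Wedderburn's little theorem. Your write-up fills in precisely those standard steps (division rings become finite fields, centrality gives the embedding of $\Fq$, and the $\Fq$-dimension count yields $|G|=\sum_{i=1}^{s} n_i^{2}r_i$), so there is nothing to correct.
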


%\todomiguel{De la descomposici? Wedderburn-Artin es pot deduir que tots els ideals s?n principals?}

For some semisimple group algebras over finite fields, the Wedderburn-Artin decomposition is known. Below we list some of them:
\begin{itemize}
    \item Let $C_n$ be the cyclic group of order $n$. Assume that the polynomial $\mx^n-1\in\Fq[\mx]$ can be factorised as $\mx^n-1 = \prod_{j=1}^r f_j$, where $f_j$ is irreducible over $\Fq[\mx]$. Then the Wedderburn-Artin decomposition of $\Fq[C_n]$ can be obtained by applying the Chinese Remainder Theorem: 
    %(which can be found, for instance, in \cite[p. 66, Corollary 2.25]{Adkins1992}) provides 
    \begin{equation}\label{eq:descomp_ciclic}
    \Fq[C_n] \cong \dfrac{\Fq[\mx]}{\langle\mx^n-1\rangle} \cong  \bigoplus_{i=1}^{r} \dfrac{\Fq[\mx]}{\langle f_i\rangle} \cong \bigoplus_{i=1}^{r} \Fqr{\deg{f_i}}.
  \end{equation}
\end{itemize}

The generalisation of the previous result for abelian groups is computed in \cite{Perlis1950}, which is the so-called Perlis-Walker Theorem. 

\begin{itemize}
    \item If $G$ is an abelian group of order $n$, then the Wedderburn-Artin decomposition of $\Fq[G]$ is as follows:
    \begin{equation}\label{eq:descomp_abelia}
    \Fq[G] \cong \bigoplus_{d | n} (\mathbb{F}_{q^{t_d}})^{\oplus a_d},
    \end{equation}
    where $t_d=|\Fq(\alpha_d) : \Fq|$, with $\alpha_d$ a primitive $d$-th root of unity, $a_d = \frac{n_d}{t_d}$, and $n_d$ the number of elements of order $d$ in $G$.
    %\todoxaro{proposta alternativa: si $t_d=|\Fq(\alpha_d) : \Fq|$, pose $\Fqr{t_d}$ en la descomposici? de l'?lgebra.}
\end{itemize}

In \cite[Theorem 3.1]{Brochero2015}, the Wedderburn-Artin decomposition of dihedral group algebras is given.  For every non-zero polynomial $g\in \Fq[\mx]$, we denote by $g^*$ the {\em reciprocal} polynomial of $g$, \emph{i.e.} $g^*(\mx) =g(0)^{-1}\mx^{deg(g)}g(\mx^{-1})$.  The polynomial $g$ is said to be {\em auto-reciprocal} if $g=g^*$.  In this case,  $g$ always has even degree \cite[Remark 3.2]{Brochero2015}.
Consequently $\mx^n-1$ can be factorised over $\Fq[\mx]$ into irreducible monic polynomials as 
\[\mx^n-1 = f_1f_2 \cdots f_r f_{r+1}f_{r+1}^* \cdots f_{r+s}f_{r+s}^*,
\]
where $f_1 = \mx-1$, $f_2=\mx+1$ if $n$ is even, and $f_j=f_{j}^*$ for $1\leqslant j\leqslant r$. In this way, $r$ is the number of auto-reciprocal factors in the factorisation and $2s$ the number of factors that are not auto-reciprocal.

\begin{itemize}
    \item Let $D_n=\langle x, y \, | \, x^{n} =y^2= 1, y^{-1}xy = x^{-1} \rangle$ be the dihedral group of order $2n$. Set $\zeta(n) = 2$ if $n$ is even and $\zeta(n) = 1$ otherwise.  Then
    \begin{equation}\label{eq:descomp_diedric}
    \Fq[D_n] \cong \bigoplus_{i=1}^{r+s}A_i , \mbox{ \ \ where }  \quad
    A_i= 
    \begin{cases}
        \Fq \oplus \Fq & \text{if } 1 \leqslant i \leqslant \zeta(n) \\[1mm]
        M_2 \left(\Fqr{\deg(f_i)/2} \right) & \text{if } \zeta(n) +1\leqslant i \leqslant r \\[2mm]
         M_2 \left(\Fqr{\deg(f_i)} \right) & \text{if } r+1 \leqslant i \leqslant r + s
    \end{cases}.
    \end{equation}

\end{itemize}

Pursuing this line of work, in \cite[Theorems 3.1 and 3.6]{Gao2021}, we find the  Wedderburn-Artin decomposition for generalised quaternion group algebras. In this case, we need to consider both the factorisation into irreducible monic polynomials over $\Fq[\mx]$ of $\mx^n-1$ given in the dihedral case, and also the factorisation into irreducible monic polynomials of the polynomial $\mx^n+1 = g_1g_2 \cdots g_t g_{t+1}g_{t+1}^* \cdots g_{t+k}g_{t+k}^*$, where $g_1 = \mx+1$ if $n$ is odd. 

\begin{itemize}
       
    \item Let $Q_{n}$ be the generalised quaternion group of order $4n$, with $n\geq 1$, which admits the presentation $Q_{n} = \langle x, y \, | \, x^{2n} = 1, y^2 = x^n, y^{-1}xy = x^{-1} \rangle$.  Set $\mu(n) = 0$ if $n$ is even and $\mu(n) = 1$ otherwise.  Then 
    \begin{equation}\label{eq:descomp_quaterni}
    \Fq[Q_{n}] \cong \bigoplus_{i=1}^{r+s}A_i\oplus \bigoplus_{i=1}^{t+k}B_i,
        \end{equation}
    where every $A_i$ is given as in the dihedral case (see (\ref{eq:descomp_diedric})) and 
    \[
    B_i= 
    \begin{cases}
        \Fq \oplus \Fq & \text{if } 1\leq  i \leqslant \mu(n) \text{ and } q\equiv 1 \text{ (mod 4)}\\[1mm]
        \Fq(\sqrt{-1}) & \text{if } 1\leq  i \leqslant \mu(n) \text{ and } q\equiv 3 \text{ (mod 4)} \\[1mm]
        M_2\left(\Fqr{\deg(g_i)/2}\right) & \text{if } \mu(n) + 1\leqslant i \leqslant t \\[2mm]
         M_2\left(\Fqr{\deg(g_i)}\right) & \text{if } t+1 \leqslant i \leqslant t + k
    \end{cases} .
    \]
\end{itemize}

It is worth noting that there exists a criterion to decide when the group algebras $\Fq[D_{2n}]$ and $\Fq[Q_{n}]$ are isomorphic or not. This is done in \cite{Flaviana2009}, where it is shown that $\Fq[D_{2n}]$ and $\Fq[Q_{n}]$ are isomorphic $\Fq$-algebras if and only if $n$ is even or $q\equiv 1 \text{ (mod 4)}$.

There are other groups for which the Wedderburn-Artin decomposition over finite fields is also known. For instance, in \cite{Gao2020} and \cite{Brochero2022}, the authors compute the Wedderburn-Artin decomposition for generalised dihedral group algebras and some metacyclic group algebras respectively. Besides, by adapting known results about $\mathbb{Q}[S_n]$ and $\mathbb{Q}[A_n]$, the Wedderburn-Artin decompositions of $\Fq[S_n]$ and $\Fq[A_n]$ are obtained in \cite{Ricardo2023}.

%We will denote by $[n,k,d]$ the parameters of a linear code of length $n$ with dimension $k$ and minimum distance $d$. 

\section{Group codes in semisimple group algebras}
\label{sec_groups}

In coding theory, a linear code $\mathcal{C}$ of $\Fq^n$ is said to be \emph{cyclic} if it satisfies that a word $(c_1, c_2, \dots, c_n) \in  \cC$ if and only if $(c_2,\dots, c_n, c_1)\in \cC$. 
Cyclic codes have nice properties, and efficient decoding algorithms have been developed for them. The key fact is that a cyclic code can be seen as an ideal, generated by a monic polynomial dividing $\mx^n-1$, in the quotient ring $\Fq[\mx]/\langle \mx^n-1\rangle$, which is a principal ideal ring.
The notion of group code is a natural extension of the one of cyclic code, so that a
cyclic code is a group code when the associated group is cyclic.

Set $G=\{g_1,\dots ,  g_n\}$. Following \cite{gonzalez2019group, bernal2009intrinsical}, we say that a linear code $\cC\subseteq \Fq^n$ is a (left) {\em $G$-code} if there exists a bijection $\theta: \{1,\dots, n\} \longrightarrow G$ such that the set
\[
\left\{\sum_{i=1}^n a_i \theta(i)\ |\  (a_1, \dots, a_n) \in  \cC \right\}
\]
is a (left) ideal of the group algebra $\Fq[G]$. In this way, a linear code $\cC$ over $\Fq$ will be a \emph{group code} if there exists a finite group $G$ such that $\cC$ is a $G$-code. In practice, we usually identify $\cC$ with its corresponding ideal of $\F[G]$. When the group $G$ is abelian (resp. non-abelian) we say that $\cC$ is an abelian (resp. non-abelian) group code.  Not all linear codes can be realised as group codes; in fact, in \cite{bernal2009intrinsical} the reader can find a criterion to decide when a linear code is a group code. Moreover, notice that a given linear code $\cC$ can be seen as group code over two different groups: for instance $\cC = \{000000,111111\}\subseteq \mathbb{F}_2^6$ is a $G$-code for any group $G$ of order $6$.

%When $\Fq[G]$ is semisimple,  the corresponding group codes are always principal ideals.  Moreover,  we will use hereafter that, in virtue of the Wedderbun-Artin decomposition, we can always see such a group code as a sum of principal ideals of matrices over finite extensions of $\Fq$.

%Given a group $G$,  in this section we pay attention on $G$-codes for which the Wedderburn-Artin decomposition of $\Fq[G]$ is known.  We will compute the number of $G$-codes we can construct as well as their dimension.  

% I DE LA DISTANCIA Q PODEM DIR? \todovictor{había un paper reciente de Willems con cotas, de 2022, lo comentamos en la MatSI; dicho paper cita a \cite{Garcia2022}}

Assume that the $\Fq$-algebra of a group $G$ is semisimple. Then by Theorem \ref{Wedderburn} we can write  $\Fq[G] \cong \bigoplus_{i=1}^s M_{n_i} \left( \Fqr{r_i} \right)$ for some positive integers $n_i, r_i$, and any ideal of $\Fq[G]$ can be seen as a sum $I_1\oplus\cdots\oplus I_s$,  where $I_i$ is an ideal in $M_{n_i} \left( \Fqr{r_i} \right)$, for $i \in \{1, \dots, s\}$.  Moreover,  this decomposition is unique.  This is the reason why,  in order to study the parameters and properties of group codes of $\Fq[G]$,  we start by exploring the ideals of an arbitrary ring of matrices $M_n(\Fqr{t})$, over a finite field $\Fqr{t}$.   

Since $M_n(\Fqr{t})$ is a principal ideal ring,  every ideal $I$ of $M_n(\Fqr{t})$ is generated by a matrix $M\in M_n(\Fqr{t})$, that is,
\begin{equation}\label{eq:idealgenerat}
I=\langle M\rangle=\{XM\ |\ X\in M_n(\Fqr{t}) \}.
\end{equation}
From the point of view of ring theory,  the rank of the ideal $I$  is just the rank of any generator matrix $M$ of $I$.  In contrast,  if we want to compute the dimension of the ideal $I$ as an $\Fq$-vector space, then we must first take into account the vector space over 
$\Fqr{t}$ generated by the rows of $M$,  that is, $\rsp(M)=\{xM\ |\ x\in\Fqr{t}^n \}$,  since then  $\rk(M)$ is exactly its dimension, that is,  
\[
\rk(M)=\dim_{\Fqr{t}}(\rsp(M)).
\] 
Now,  note that $M_n(\Fqr{t}) \cong \left(\Fqr{t}^n\right)^{\oplus n}$ as $\Fqr{t}$-vector spaces through the isomorphism 
%$\varphi:  M_n(\Fqr{t})  \longrightarrow   \left(\Fqr{t}^n\right)^{\oplus n}$ %given by 
 \[
    \begin{array}{cccc}
    \varphi: &  M_n(\Fqr{t}) & \longrightarrow &   \left(\Fqr{t}^n\right)^{\oplus n}\\[1mm] 
     & X = \begin{pmatrix} x_1 \\ x_2 \\ \vdots \\ x_n  \end{pmatrix} & \longmapsto & x_1 \oplus x_2 \oplus \cdots \oplus x_n
    \end{array}  .
    \]
In this way,  $\varphi(I)=\varphi(\langle M\rangle )=\rsp(M)\oplus\cdots\oplus \rsp(M)$ and therefore $\dim_{\Fqr{t}}(I)=n\rk(M)$. 
As a result,  the dimension of $I=\langle M\rangle$ as an $\Fq$-vector space is given in terms of $\rk(M)$ as follows. 

\begin{lemma}
The dimension of an ideal $I=\langle M\rangle$ of  $M_n(\Fqr{t})$ as $\Fq$-vector space is $\dim_{\Fq}(I)=t n \rk(M)$.
\end{lemma}

Now we are ready to compute the dimension of any $G$-code when $\Fq[G]$ can be realised as a sum of matrix algebras over finite fields. 

\begin{theorem}\label{code_dim}
    Let $G$ be a group such that $\Fq[G]\overset{\scriptscriptstyle \psi}{\cong} \bigoplus_{i=1}^s M_{n_i} \left( \Fqr{r_i} \right)$ for some positive integers $n_i, r_i$, for $1\leqslant i\leqslant s$, and for some isomorphism $\psi$ of algebras.
    %with $|G|=\sum_{i=1}^s n_i^2 r_i$. 
    Let     
    $\cC \subseteq \Fq[G]$ be a $G$-code.  If $\psi(\cC) = \bigoplus_{i=1}^s \langle M_i\rangle$ with $M_i\in M_{n_i} \left( \Fqr{r_i} \right)$, then
    \[
    \dim_{\Fq}(\mathcal{C}) = \sum_{i=1}^s n_i r_i \rk(M_i).
    \]
\end{theorem}

In addition to being able to give the dimension of any $G$-code,  the decomposition of the group algebra $\Fq[G]$ into rings of matrices over finite fields also allows us to count the total number of $G$-codes we can construct within this group algebra.  To do so,  note that in an arbitrary ring of matrices $M_n(\Fq)$,  each ideal is generated exactly by one matrix $M\in M_n(\Fq)$ in row reduced echelon form (see \eqref{eq:idealgenerat}).  In turn,  each matrix of $M_n(\Fq)$ in row reduced echelon form having rank $k$ generates exactly a $k$-dimensional vector subspace of $\Fq^n$ (its row space).  Thus,  there is a bijection between the set of ideals of rank $k$ in $M_n(\Fq)$ and  the  set of $k$-dimensional vector subspaces of $\Fq^n$, which is the Grassmann variety ${\cal G}_{q}(k,n)$.  As it is well known, the cardinality of ${\cal G}_{q}(k,n)$ is given by the $q$-ary Gaussian coefficient (\cite[Ch. 24]{van2001course}):
  \[
   \begin{bmatrix} n \\ k \end{bmatrix}_q =\dfrac{(q^n-1)(q^{n-1}-1) \cdots (q^{n-k+1}-1)}{(q^k-1)(q^{k-1}-1) \cdots (q-1)} .
\]
Therefore,  this is also the number of ideals of rank $k$ in $M_n(\Fq)$. 
It follows then that the total number of ideals we can construct in $M_n(\Fq)$, denoted as ${\cal I}_q(n)$,  is exactly the cardinality of the projective geometry ${\cal P}(\Fq^n)$ (that is, the set of all subspaces) associated to the $\Fq$-vector space $\Fq^n$.  This number is:
\[
{\cal I}_q(n)=|{\cal P}(\Fq^n)|=\sum_{k=0}^n  \begin{bmatrix} n \\ k \end{bmatrix}_q  .
\]

Now,  we can compute the exact number of $G$-codes we can construct in $\Fq[G]$ when this algebra is semisimple. 

\begin{theorem}\label{numberofcodes}
    If $\Fq[G] \cong \bigoplus_{i=1}^s M_{n_i} \left( \Fqr{r_i} \right)$, then the number of $G$-codes over $\Fq$ is given by the formula
    \[
     \prod_{i=1}^s {\cal I}_{q^{r_i}}(n_i).
    \]
\end{theorem}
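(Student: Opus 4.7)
The plan is to reduce the counting problem to the matrix-ring factors of the Wedderburn-Artin decomposition and then invoke the description of the ideals of $M_n(\Fqr{t})$ developed in the paragraphs immediately preceding the theorem.

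First I would note that, since $G$-codes are in bijection with left ideals of $\Fq[G]$, counting $G$-codes is the same as counting left ideals of $\bigoplus_{i=1}^{s} M_{n_i}(\Fqr{r_i})$. The key structural fact is that in a finite direct sum of rings $R = R_1 \oplus \cdots \oplus R_s$, every (left) ideal $I$ decomposes uniquely as $I = I_1 \oplus \cdots \oplus I_s$, where each $I_i$ is a (left) ideal of $R_i$. This follows from the existence of the central orthogonal idempotents $e_1, \ldots, e_s$ corresponding to the direct-sum decomposition: one sets $I_i = e_i I$, verifies that each $I_i$ is an ideal of $R_i$, and uses $1 = e_1 + \cdots + e_s$ to recover $I = \bigoplus_i I_i$. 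Uniqueness follows from the direct-sum structure.

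Once this reduction is in place, the number of left ideals of $\bigoplus_{i=1}^{s} M_{n_i}(\Fqr{r_i})$ is exactly the product over $i$ of the number of left ideals of $M_{n_i}(\Fqr{r_i})$. Then I would invoke the discussion preceding the statement, which establishes a bijection between left ideals of $M_n(\mathbb{K})$ (with $\mathbb{K}$ a finite field) and the subspaces of $\mathbb{K}^n$ via the row-space correspondence. Applied with $\mathbb{K} = \Fqr{r_i}$ of cardinality $q^{r_i}$, this gives precisely $\mathcal{I}_{q^{r_i}}(n_i) = \sum_{k=0}^{n_i} \begin{bmatrix} n_i \\ k \end{bmatrix}_{q^{r_i}}$ ideals in the $i$-th factor.

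Combining both observations yields $\prod_{i=1}^{s} \mathcal{I}_{q^{r_i}}(n_i)$, as required. I do not expect any step to be an obstacle: the direct-sum decomposition of ideals via central idempotents is standard, and the enumeration of ideals of each matrix factor has already been carried out in the preamble to the theorem via the Grassmannian interpretation. The proof is therefore essentially a bookkeeping argument combining these two facts.
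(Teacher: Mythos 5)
Your proposal is correct and follows essentially the same route as the paper, which leaves this theorem without a separate proof because the two ingredients — the unique decomposition of any ideal of $\Fq[G]$ as $I_1\oplus\cdots\oplus I_s$ with $I_i$ an ideal of $M_{n_i}\left(\Fqr{r_i}\right)$, and the bijection between ideals of $M_n\left(\Fqr{t}\right)$ and subspaces of $\Fqr{t}^n$ giving $\mathcal{I}_{q^{r_i}}(n_i)$ ideals per factor — are exactly the facts stated in the paragraphs preceding the statement. Your explicit justification of the ideal decomposition via the central orthogonal idempotents merely spells out what the paper asserts, so there is nothing to add.
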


In particular, if $G$ is a dihedral group, then using the decomposition of the semisimple group algebra $\mathbb{F}_q[G]$ given in (\ref{eq:descomp_diedric}) we recover the number $\mathcal{N}$ of dihedral codes given in \cite[Theorem 3.1]{CaoCaoMa}.

\begin{example}
As an easy application of the previous result, and using some of the Wedderburn-Artin's decompositions described in subsection \ref{group-algebra-decomposition}, let us compute the number of group codes in the next group algebras.

\begin{center}
\begin{tblr}{colspec={ccc}, row{2-Z}={mode=dmath}, vlines, hlines}
   Group algebra & Wedderburn-Artin decomposition & Number of group codes \\
   \mathbb{F}_3[C_5] & \mathbb{F}_3 \oplus\mathbb{F}_{3^4}  & 2 \cdot 2 = 4 \\
   \mathbb{F}_3[D_4] & 4\mathbb{F}_3 \oplus M_2(\mathbb{F}_3) & 2^4 \cdot 6 = 96
\end{tblr}
\end{center}
\end{example}

\section{Group codes from direct products of groups}
\label{sec_direct}

In this section we focus on group codes that arise in a group algebra of type $\Fq[G_1\times \cdots \times G_r] $,  when the group algebra corresponding to each factor $G_i$ is semisimple.   
To this end,  starting from the Wedderburn-Artin decomposition of each $\Fq[G_i]$,  we analyse the specific decomposition of the group algebra of the direct product,  also as a direct sum of matrix rings over finite fields (Theorem \ref{Wedderburntp}).   Then we apply this result to some specific direct products of groups involving dihedral groups.  In the last part of this section we collect  some examples of group codes that can be obtained by using the previous techniques.

\subsection{The group algebra of a direct product of groups}

The aim of this section is to obtain the Wedderburn-Artin decomposition of the group algebra corresponding to a direct product of groups having a semisimple finite group algebra. Specifically, given two finite groups $G$ and $H$ such that both $\Fq[G]$ and $\Fq[H]$ can be realised as a direct sum of matrix rings over finite fields (Theorem \ref{Wedderburn}), we are going to see how the tensor product of $\Fq$-algebras can be used in order to obtain the corresponding decomposition of the group algebra $\Fq[G\times H]$ as a  direct sum of matrix rings too.  

Note that the map $(g,h)\mapsto g\otimes h$, for all $g\in G$ and $h\in H$ is an homomorphism from $G\times H$ to the group of units of $\Fq[G] \otimes_{\Fq} \Fq[H]$, which can be extended by linearity to an $\Fq$-algebra isomorphism (see \cite[Lemma 3.4, p.~25]{Passman1977}):
\begin{equation}\label{isom_direct_tens}
    \Fq[G \times H ] \cong  \Fq[G] \otimes_{\Fq} \Fq[H] .
    % \begin{array}{cccc}
    %\tau: & \Fq[G] \otimes_{\Fq} \Fq[H] & \longrightarrow  & \Fq[G \times H].\\
          %& g \otimes h & \longmapsto &  (g, h)
    %\end{array}
\end{equation}
%for all $g\in G$ and $h\in H$ is an isomorphism of $\Fq$-algebras (see \cite[Lemma 3.4, p.~25]{Passman1977}).

%We will need the following result, which is featured in \cite[Lemma 3.4, p.~25]{Passman1977}.

%\begin{lemma}\label{tpgroupalgebra}
%Let $\mathbb{K}$ be any field and $G, H$ two groups. Then the map $\tau: \mathbb{K}[G] \otimes_{\mathbb{K}} \mathbb{K}[H] \longrightarrow \mathbb{K}[G \times H]$ such that  $\tau(g \otimes h) = (g, h)$ for all $g\in G$ and $h\in H$ is an isomorphism of $\mathbb{K}$-algebras.
    
    %$ \mathbb{K}[G \times H] \overset{\tau}{\cong} \mathbb{K}[G] \otimes_{\mathbb{K}} \mathbb{K}[H]$, where $\tau((g, h)) = g \otimes h$.
%\end{lemma}

%The following lemma is well known and its proof can be found in \cite[Theorem 4.26, p.~288]{jacobson1985}.
%\begin{lemma}\label{galoisgroupfinite}
    %Let $\Fpr{r}/\Fpr{s}$ be an extension of finite fields. Its Galois group is $\gal(\Fpr{r}/\Fpr{s}) = \langle \sigma \rangle$, where $\sigma:\Fpr{r} \rightarrow \Fpr{r}$ is given by $\sigma(x) = x^{p^s}$.
%\end{lemma}

% \begin{proof}
%     Galois Theory tells us that $\gal(\Fqr{\ell} / \Fq) = \langle \sigma \rangle \cong C_\ell$ and $\gal(\Fqr{n} / \Fq) \cong C_n$, so $\gal(\Fqr{\ell} / \Fqr{n}) \cong C_{\ell/n}$ since $\frac{\gal(\Fqr{\ell} / \Fq)}{\gal(\Fqr{\ell} / \Fqr{n})} \cong \gal(\Fqr{n} / \Fq)$. Namely, $\gal(\Fqr{\ell} / \Fqr{n}) = \langle \sigma^n\rangle$,
% \end{proof}

%Next, we deep on the structure of  $ \Fq[G] \otimes_{\Fq} \Fq[H]$.  

%If we apply the isomorphism provided in \eqref{isom_direct_tens} and  \cref{tpproperties,tpmatrices,tpfields}, 

\begin{theorem}\label{Wedderburntp}
    Let $\Fq[G]$ and $\Fq[H]$ be two group algebras with the following Wedderburn-Artin decompositions:
    \begin{align*}
        \Fq[G] \overset{\scriptscriptstyle \psi_1}{\cong}  & \bigoplus_{i=1}^{s_G} M_{n_i} \left( \Fqr{r_i} \right) \\
        \Fq[H] \overset{\scriptscriptstyle \psi_2}{\cong} & \bigoplus_{j=1}^{s_H} M_{m_j} \big( \Fqr{t_j} \big).
    \end{align*}
Then, $\Fq[G \times H]$ can be decomposed as:
\[
\Fq[G \times H] \cong \bigoplus_{i=1}^{s_G} \bigoplus_{j=1}^{s_H} \left( M_{n_im_j} \left( \Fqr{\lcm(r_i, t_j)} \right) \right)^{\oplus  \gcd(r_i, t_j)} .
\]

%where $d_{ij} = \gcd(r_i, t_j)$ and $\psi = \psi_1 \otimes \psi_2$.
\end{theorem}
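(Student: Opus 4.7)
The plan is to chain together, in order, all the preliminary results the paper has just assembled, so that the proof becomes a direct computation. There is no real obstacle beyond bookkeeping: every nontrivial isomorphism needed has been established, and the task is to apply them in the right sequence.

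First I would invoke the isomorphism $\mathbb{F}_q[G \times H] \cong \mathbb{F}_q[G] \otimes_{\mathbb{F}_q} \mathbb{F}_q[H]$ recorded in \eqref{isom_direct_tens}. Then I would substitute the Wedderburn-Artin decompositions of $\mathbb{F}_q[G]$ and $\mathbb{F}_q[H]$ (applying \cref{tpproperties} part (i) to push these isomorphisms inside the tensor product), and use \cref{tpproperties:directsum} to distribute the tensor product across both direct sums, obtaining
\[
\mathbb{F}_q[G\times H] \cong \bigoplus_{i=1}^{s_G}\bigoplus_{j=1}^{s_H} \Bigl( M_{n_i}(\mathbb{F}_{q^{r_i}}) \otimes_{\mathbb{F}_q} M_{m_j}(\mathbb{F}_{q^{t_j}}) \Bigr).
\]

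Next, on each summand I would apply \cref{tpmatrices} to collapse the tensor product of matrix algebras into a single matrix algebra:
\[
M_{n_i}(\mathbb{F}_{q^{r_i}}) \otimes_{\mathbb{F}_q} M_{m_j}(\mathbb{F}_{q^{t_j}}) \cong M_{n_i m_j}\bigl(\mathbb{F}_{q^{r_i}} \otimes_{\mathbb{F}_q} \mathbb{F}_{q^{t_j}}\bigr).
\]
Then \cref{tpfields} identifies the coefficient algebra as $(\mathbb{F}_{q^{\ell_{ij}}})^{\oplus d_{ij}}$, with $\ell_{ij}=\lcm(r_i,t_j)$ and $d_{ij}=\gcd(r_i,t_j)$.

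Finally, I would use the elementary fact that, for any ring $R$ and positive integer $d$, $M_n(R^{\oplus d}) \cong M_n(R)^{\oplus d}$ as rings (obtained by reading off the coordinates entry-wise), together with \cref{tpproperties} part (i) once more, to rewrite
\[
M_{n_i m_j}\bigl((\mathbb{F}_{q^{\ell_{ij}}})^{\oplus d_{ij}}\bigr) \cong \bigl( M_{n_i m_j}(\mathbb{F}_{q^{\ell_{ij}}}) \bigr)^{\oplus d_{ij}}.
\]
Reassembling the outer double direct sum yields the claimed decomposition. The argument is entirely formal once \cref{tpfields} is in hand, which is why the real work of the section is the proof of that proposition rather than this theorem.
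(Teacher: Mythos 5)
Your proposal is correct and follows essentially the same route as the paper's proof: the isomorphism \eqref{isom_direct_tens}, distribution of the tensor product over the direct sums via \cref{tpproperties}, \cref{tpmatrices}, \cref{tpfields}, and the final identification $M_n(R^{\oplus d}) \cong \left(M_n(R)\right)^{\oplus d}$ (which the paper uses implicitly in its last displayed isomorphism). No gaps.
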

\begin{proof}
If we apply the isomorphism provided in \eqref{isom_direct_tens}, joint with \cref{tpfields},  \cref{tpproperties} and \cref{tpmatrices}, then we have the following chain of $\Fq$-algebra isomorphisms:
% \[
% \begin{array}{ccc}
%     \Fq[G \times H] & \overset{\scriptscriptstyle \eqref{isom_direct_tens}}{\cong} &  \Fq[G] \otimes_{\Fq} \Fq[H]\\
%     & \overset{\scriptscriptstyle \psi_1 \otimes \psi_2}{\cong} &
%         \left( \bigoplus_{i=1}^{s_G} M_{n_i} (\Fqr{r_i}) \right) \otimes_{\Fq} \left(\bigoplus_{j=1}^{s_H} M_{m_j} ( \Fqr{t_j}) \right)\\
%     & \cong & \bigoplus_{i=1}^{s_G} \bigoplus_{j=1}^{s_H} \left( M_{n_i} \left( \Fqr{r_i} \right) \otimes_{\Fq}  M_{m_j} ( \Fqr{t_j})\right)\\
%     & \cong & \bigoplus_{i=1}^{s_G} \bigoplus_{j=1}^{s_H} \left(  M_{n_im_j}(\bigoplus_{k=1}^{d_{ij}} \Fqr{\lcm(r_i, t_j)})  \right)\\
%     & \cong & \bigoplus_{i=1}^{s_G} \bigoplus_{j=1}^{s_H} \left( \bigoplus_{k=1}^{d_{ij}}  M_{n_im_j} \left( \Fqr{\lcm(r_i, t_j)} \right) \right).
% \end{array}
% \]

% \begin{align*}
%     \Fq[G \times H] & \overset{\scriptscriptstyle \eqref{isom_direct_tens}}{\cong} \Fq[G] \otimes_{\Fq} \Fq[H]\\
%     & \overset{\scriptscriptstyle \psi_1 \otimes \psi_2}{\cong}
%         \left( \bigoplus_{i=1}^{s_G} M_{n_i} (\Fqr{r_i}) \right) \otimes_{\Fq} \left(\bigoplus_{j=1}^{s_H} M_{m_j} ( \Fqr{t_j}) \right)\\
%     & \cong \bigoplus_{i=1}^{s_G} \bigoplus_{j=1}^{s_H} \left( M_{n_i} \left( \Fqr{r_i} \right) \otimes_{\Fq}  M_{m_j} ( \Fqr{t_j})\right)\\
%     & \cong \bigoplus_{i=1}^{s_G} \bigoplus_{j=1}^{s_H}  M_{n_im_j} \left(\bigoplus_{k=1}^{d_{ij}} \Fqr{\lcm(r_i, t_j)}  \right)\\
%     & \cong \bigoplus_{i=1}^{s_G} \bigoplus_{j=1}^{s_H} \left( \bigoplus_{k=1}^{d_{ij}}  M_{n_im_j} \left( \Fqr{\lcm(r_i, t_j)} \right) \right).
% \end{align*}

\begin{center}
\begin{tblr}{colspec={rcl}, cells={mode=dmath}}
    \Fq[G \times H] & \overset{\scriptscriptstyle \eqref{isom_direct_tens}}{\cong} &  \Fq[G] \otimes_{\Fq} \Fq[H]\\
    & \overset{\scriptscriptstyle \psi_1 \otimes \psi_2}{\cong} &
        \left( \bigoplus_{i=1}^{s_G} M_{n_i} (\Fqr{r_i}) \right) \otimes_{\Fq} \left(\bigoplus_{j=1}^{s_H} M_{m_j} ( \Fqr{t_j}) \right)\\
    &\overset{\scriptscriptstyle \text{Lem.} \ref{tpproperties}}{\cong} & \bigoplus_{i=1}^{s_G} \bigoplus_{j=1}^{s_H} \left( M_{n_i} \left( \Fqr{r_i} \right) \otimes_{\Fq}  M_{m_j} ( \Fqr{t_j})\right)\\
    &  \overset{\scriptscriptstyle \text{Lem.} \ref{tpmatrices}}{\cong} &
    \bigoplus_{i=1}^{s_G} \bigoplus_{j=1}^{s_H}   M_{n_im_j} \left( \Fqr{r_i} \otimes_{\Fq} \Fqr{t_j}\right)\\
    & \overset{\scriptscriptstyle \eqref{eqtpfields}}{\cong} & \bigoplus_{i=1}^{s_G} \bigoplus_{j=1}^{s_H}   M_{n_im_j} \left((\Fqr{\lcm(r_i, t_j)} )^{\oplus \gcd(r_i, t_j) }  \right)\\
    & \cong & \bigoplus_{i=1}^{s_G} \bigoplus_{j=1}^{s_H}  \left(M_{n_im_j} \left( \Fqr{\lcm(r_i, t_j)} \right)\right)^{\oplus \gcd(r_i, t_j)} .
\end{tblr}
\end{center}
\end{proof}

\begin{remark}
The above result tells us that, if the algebras $\Fq[G]$ and $\Fq[H]$ can be decomposed as direct sums of matrix algebras over finite fields,  then so can $\Fq[G\times H]$. In fact, after reindexing the terms, we get $\Fq[G\times H]\cong  \bigoplus_{i=1}^{s_{(G\times H)}} M_{\ell_i} (\Fqr{s_i})$, for certain positive integers $s_{(G\times H)}, l_i, s_i$.  Thus,  given a direct product of  finite groups $G_1\times \cdots\times G_r$ such that each algebra $\Fq[G_i]$  can be decomposed as a direct sum of matrix algebras, the recursive application of Theorem \ref{Wedderburntp} yields the corresponding decomposition of the group algebra $\Fq[G_1\times \cdots \times G_r]$. 
\end{remark}

%\todomiguel{Que passa si igualem la descomposicio de \ref{Wedderburntp} amb el teorema de Perlis-Walker?}
    %Let $A \cong C_{n_1} \times C_{n_2} \times \cdots \times C_{n_t}$ be an abelian group. If we apply}

% We can easily generalise \cref{Wedderburntp} for any finite product of groups:

% \begin{theorem}
%     Let $\left\{\Fq[G_j]\right\}_{j=1}^m$ be a family of group algebras satisfying the following Wedderburn-Artin decompositions:
%     \[
%         \Fq[G_j] \overset{\scriptscriptstyle \psi_j}{\cong}  \bigoplus_{i=1}^{s_j} M_{n_{ij}} \left( \Fqr{r_{ij}} \right)
%     \]

% Then, $\Fq\left[\prod_{j=1}^m G_j\right]$ can be decomposed as:
% \[
% \Fq\left[\prod_{j=1}^m G_j\right] \overset{\scriptscriptstyle \psi}{\cong} \bigoplus_{i=1}^{s_G} \bigoplus_{j=1}^{s_H} \left( \bigoplus_{k=1}^{d_{ij}}  M_{n_im_j} \left( \Fqr{\lcm(r_i, t_j)} \right) \right)
% \]

% where $d_{ij} = \gcd(r_i, t_j)$ and $\psi = \bigotimes_{j=1}^m \psi_j$.
% \end{theorem}

As a consequence of \cref{Wedderburntp} and the specific group algebra decompositions listed in Section \ref{sec:decomposition},  we are able to decompose the group algebra of any direct product of groups whose factors are cyclic groups,  dihedral groups or generalised quaternion groups.  In short, what we obtain, for this type of groups, is an expression of the group algebra as a direct sum of rings of matrices over finite fields.  As an example of the application of this technique, below we present the corresponding decomposition for the group algebras of the direct product of a dihedral group and a cyclic group,  the direct product of two dihedral groups or the direct product of a dihedral group and a generalised quaternion group. 

Let us first introduce the following numerical lemma.

\begin{lemma}\label{lem:gcd}
Let $m$, $n$ be two positive integers such that $m$ is even. Denote $\alpha=\gcd(m,n)$ and $\beta=\lcm(m,n)$. Then
\begin{enumerate}[$(i)$]
\item If $2\alpha \mid m$, then $\gcd(\frac{m}{2},n)=\alpha$ and $\lcm(\frac{m}{2},n)=\frac{\beta}{2}$.
\item If $2\alpha\nmid m$, then $\gcd(\frac{m}{2},n)=\frac{\alpha}{2}$ and $\lcm(\frac{m}{2},n)=\beta$.
\item If $n$ is also an even number, then $\gcd(\frac{m}{2}, \frac{n}{2})=\frac{\alpha}{2}$ and $\lcm(\frac{m}{2},\frac{n}{2})=\frac{\beta}{2}$.
\end{enumerate}
\end{lemma}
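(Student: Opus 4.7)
The plan is to reduce all three parts to a single workhorse: the canonical parametrisation $m=\alpha m'$, $n=\alpha n'$ with $\gcd(m',n')=1$, combined with the identity $\gcd(x,y)\cdot\lcm(x,y)=xy$. With the latter in hand, once a gcd is pinned down, the corresponding lcm follows immediately from the relation $mn=\alpha\beta$, so the whole argument reduces to computing three gcds.

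First I would handle (i) and (ii) by a case split on the parity of $m'$, noting that the dichotomy ``$2\alpha\mid m$'' versus ``$2\alpha\nmid m$'' is exactly the statement ``$m'$ even'' versus ``$m'$ odd''. In case (i), the factor of $2$ is absorbed into $m'$, yielding $m/2=\alpha(m'/2)$ and $n=\alpha n'$ with $\gcd(m'/2,n')=1$; this gives $\gcd(m/2,n)=\alpha$ and then $\lcm(m/2,n)=\beta/2$. In case (ii), $m'$ is odd and $m$ is even, which forces $\alpha$ to be even; one then absorbs the factor of $2$ into $\alpha$ instead, writing $m/2=(\alpha/2)m'$ and $n=(\alpha/2)(2n')$, and uses $\gcd(m',2n')=1$ (which follows since $m'$ is odd and coprime to $n'$) to obtain $\gcd(m/2,n)=\alpha/2$, hence $\lcm(m/2,n)=\beta$.

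For part (iii) the argument is a direct divide-by-$2$ reduction: both $m$ and $n$ being even makes $\alpha$ even too, and writing $m=2m_1$, $n=2n_1$ yields $\alpha=2\gcd(m_1,n_1)$, so $\gcd(m/2,n/2)=\alpha/2$ and $\lcm(m/2,n/2)=\beta/2$ via the product identity.

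The main potential pitfall is the bookkeeping in case (ii): one needs to notice that the naive identity ``$\gcd(m/2,n)=\gcd(m,n)/2$'' is false in general, and that the correct conclusion here depends on first verifying that the hypothesis $2\alpha\nmid m$ forces $\alpha$ to be even (otherwise $m=\alpha m'$ with both $\alpha$ and $m'$ odd would make $m$ odd, contradicting the standing assumption). Once this parity check is in place, the lemma is routine divisibility arithmetic.
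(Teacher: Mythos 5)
Your proposal is correct and follows essentially the same strategy as the paper: pin down the gcd by elementary divisibility (including the key observation in case $(ii)$ that $2\alpha\nmid m$ together with $m$ even forces $\alpha$ to be even), and then recover the lcm from the identity $\gcd\left(\frac{m}{2},n\right)\lcm\left(\frac{m}{2},n\right)=\frac{mn}{2}$. The only cosmetic difference is that you obtain the gcds via the coprime parametrisation $m=\alpha m'$, $n=\alpha n'$ with $\gcd(m',n')=1$, whereas the paper uses a two-sided divisibility argument, and you also write out part $(iii)$ explicitly, which the paper dismisses as following from well-known properties.
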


\begin{proof}
$(i)$ If  $2\alpha \mid m$, then $\alpha\mid \frac{m}{2}$, since $m$ is even. Therefore $\alpha$ divides $\gcd(\frac{m}{2},n)$, which also divides $\alpha$. Thus, we obtain the first equality.  For the second one,  we notice that
\[
\lcm\left(\frac{m}{2},n\right) \gcd\left(\frac{m}{2},n\right)=\frac{mn}{2} \then \lcm\left(\frac{m}{2},n\right)=\frac{mn}{2\alpha}=\frac{\beta}{2}.
\]
$(ii)$ If  $2\alpha \nmid m$, then  $\alpha \nmid \frac{m}{2}$, since $m$ is even. Moreover,  notice that $\alpha$ must be also an even number in this case.  Therefore $\frac{\alpha}{2}$ must divide $ \frac{m}{2}$ and we conclude that $\frac{\alpha}{2}$ is a divisor of $\gcd\left(\frac{m}{2},n\right)$.  In turn,  notice that $\gcd\left(\frac{m}{2},n\right)$ is a proper divisor of $\alpha$. Thus, necessarily, $\frac{\alpha}{2}=\gcd\left(\frac{m}{2},n\right)$. Finally, we observe that
\[
\lcm\left(\frac{m}{2},n\right) \gcd\left(\frac{m}{2},n\right)=\frac{mn}{2} \then  \lcm\left(\frac{m}{2},n\right)=\frac{mn}{\alpha}=\beta.
\]
$(iii)$ This follows from well-known properties of gcd and lcm.
\end{proof}

\begin{notation}
\label{notation}
Let us denote by $C_a, D_n, Q_m$ the cyclic group of order $a$, the dihedral group of order $2n$, and the generalised quaternion group of order $4m$, respectively. Set $\zeta(n) = 2$ if $n$ is even and $\zeta(n) = 1$ otherwise; and set $\mu(m) = 0$ if $m$ is even and $\mu(m) = 1$ otherwise. In addition, for each of the previous groups, we consider the following factorisation into irreducible monic polynomials (recall that $f^*$ denotes the reciprocal polynomial of $f$):
\vspace*{-3mm}
\begin{itemize}
\setlength{\itemsep}{-0.5mm}
\item $\mx^a-1 = p_1p_2 \cdots p_b$;
\item $\mx^n-1 = f_1f_2 \cdots f_r f_{r+1}f_{r+1}^* \cdots f_{r+s}f_{r+s}^*$, where $f_1 = \mx-1$ and $f_2=\mx+1$ if $n$ is even.
\item $\mx^m-1 = g_1g_2 \cdots g_t g_{t+1}g_{t+1}\cdots g_{t+u}g_{t+u}^*$, where $g_1 = \mx-1$, and $g_2=\mx+1$ if $m$ is even.
\item $\mx^m+1 = h_1h_2 \cdots h_l h_{l+1}h_{l+1}\cdots h_{l+k}h_{l+k}^*$, where $h_1=\mx+1$ if $m$ is odd. 
\end{itemize}
\end{notation}

\begin{corollary}
With the Notation \ref{notation}, the following Wedderburn-Artin's decompositions hold.
    \begin{corolenum}
        \item \label{corolCxD} Denote $\ell_{ij} = \lcm(\deg(f_i), \deg(p_j))$ and $a_{ij} = \gcd(\deg(f_i), \deg(p_j))$ for $1\leqslant i\leqslant r+s$ and $1\leqslant j\leqslant b$. Then $$\Fq[D_n \times C_a] \cong \displaystyle\bigoplus_{i=1}^{r+s} \bigoplus_{j=1}^{b} \left( A_{ij}^{\oplus d_{ij}} \right),$$ where
        \[
        A_{ij} = \begin{cases}
        \Fqr{\deg(p_j)} \oplus \Fqr{\deg(p_j)} & \text{if } 1 \leqslant i \leqslant \zeta(n) \\[1mm]
        M_2 \left(\Fqr{\ell_{ij}/2} \right) & \text{if } \zeta(n) +1\leqslant i \leqslant r \text{ and } 2a_{ij} \, | \, \deg(f_i) \\[2mm]
        M_2 \left(\Fqr{\ell_{ij}} \right) & \text{if } \zeta(n) +1\leqslant i \leqslant r \text{ and } 2a_{ij} \nmid \deg(f_i) \\[2mm]
        M_2 \left(\Fqr{\ell_{ij}} \right) & \text{if } r+1 \leqslant i \leqslant r + s
        \end{cases}            
        \]
       and
        \[
         d_{ij} = \begin{cases}
           \frac{a_{ij}}{2} & \text{if } \zeta(n) +1\leqslant i \leqslant r \text{ and } 2 a_{ij} \nmid \deg(f_i) \\[1mm]
           a_{ij} & \text{otherwise}
        \end{cases}
        \]

        \item \label{corolDxD} Denote $\ell_{ij} = \lcm(\deg(f_i), \deg(g_j))$ and $a_{ij} = \gcd(\deg(f_i), \deg(g_j))$  for $1\leqslant i\leqslant r+s$ and $1\leqslant j\leqslant t+u$.  Then
        \[
        \Fq[D_n \times D_m] \cong \bigoplus_{i=1}^{r+s} \bigoplus_{j=1}^{t+u} \left( A_{ij}^{\oplus d_{ij}} \right),
        \]
        where
        \[
        A_{ij} = \begin{cases}
        \left(\Fq\right)^{\oplus 4}  & \text{if } 1 \leqslant i \leqslant \zeta(n) \text{ and } 1 \leqslant j \leqslant \zeta(m) \\[2mm]
        
        \multirow{2}{*}{$M_2 \left(\Fqr{\ell_{ij}/2} \right)$} & \text{if } \zeta(n) +1\leqslant i \leqslant r \text{ and } 1 \leqslant j \leqslant \zeta(m) \\ & \text{if } 1 \leqslant i \leqslant \zeta(n) \text{ and } \zeta(m) +1\leqslant j \leqslant t \\[2mm]
        
        \multirow{2}{*}{$M_2 \left(\Fqr{\ell_{ij}} \right)$} & \text{if } r + 1\leqslant i \leqslant r + s \text{ and } 1 \leqslant j \leqslant \zeta(m) \\ & \text{if } 1 \leqslant i \leqslant \zeta(n) \text{ and } t +1\leqslant j \leqslant t + u \\[2mm]

        \multirow{3}{*}{$M_4 \left(\Fqr{\ell_{ij}/2} \right)$} & \text{if } \zeta(n) +1\leqslant i \leqslant r \text{ and } \zeta(m) +1\leqslant j \leqslant t \\ & \text{if }\zeta(n) +1\leqslant i \leqslant r, \  t+1\leqslant j\leqslant t+u \text{ and } 2a_{ij} \, | \, \deg(f_i)\\ & \text{if } r+1\leqslant i\leqslant r+s, \  \zeta(m) +1\leqslant j \leqslant t \text{ and } 2a_{ij} \, | \, \deg(g_j)\\[2mm]
        
        M_4 \left(\Fqr{\ell_{ij}} \right) & \text{ otherwise }
        \end{cases},            
        \]
        and
        \[
         d_{ij} = \begin{cases}
         \multirow{2}{*}{$2a_{ij}$} & \text{ if } 1\leqslant i\leqslant \zeta(n) \text{ and } \zeta(m)+1\leqslant j\leqslant t+u\\ & \text{ if } \zeta(n)+1\leqslant i\leqslant r+s \text{ and } 1\leqslant j\leqslant \zeta(m)\\[2mm] 
           \multirow{3}{*}{$\dfrac{a_{ij}}{2}$} & \text{if } \zeta(n) +1\leqslant i \leqslant r  \text{ and } \zeta(m) +1\leqslant j \leqslant t \\ & \text{if }\zeta(n) +1\leqslant i \leqslant r, \ t+1\leqslant j\leqslant t+u \text{ and } 2a_{ij} \, \nmid \, \deg(f_i)\\ & \text{if } r+1\leqslant i\leqslant r+s, \  \zeta(m) +1\leqslant j \leqslant t \text{ and } 2a_{ij} \, \nmid \, \deg(g_j)\\[2mm]
           a_{ij} & \text{otherwise}
        \end{cases}
        \]
        
        \item Denote $\ell_{ij} = \lcm(\deg(f_i), \deg(h_j))$ and $a_{ij} = \gcd(\deg(f_i), \deg(h_j))$  for $1\leqslant i\leqslant r+s$ and $1\leqslant j\leqslant l+k$.  Then
        \[
        \Fq[D_n \times Q_m] \cong \Fq[D_n \times D_m] \oplus \bigoplus_{i=1}^{r+s} \bigoplus_{j=1}^{l+k} \left( B_{ij}^{\oplus d_{ij}} \right),
        \]
        where
        \[
        B_{ij} = \begin{cases}
        \left(\Fq\right)^{\oplus 4}  & \text{if } 1 \leqslant i \leqslant \zeta(n) \text{ and } 1 \leqslant j \leqslant \mu(m) \\[2mm]
        
        \multirow{2}{*}{$M_2 \left(\Fqr{\ell_{ij}/2} \right)$} & \text{if } \zeta(n) +1\leqslant i \leqslant r \text{ and } 1 \leqslant j \leqslant  \mu(m) \\ & \text{if } 1 \leqslant i \leqslant \zeta(n) \text{ and }  \mu(m) +1\leqslant j \leqslant l \\[2mm]
        
        \multirow{2}{*}{$M_2 \left(\Fqr{\ell_{ij}} \right)$} & \text{if } r + 1\leqslant i \leqslant r + s \text{ and } 1 \leqslant j \leqslant  \mu(m) \\ & \text{if } 1 \leqslant i \leqslant \zeta(n) \text{ and } l +1\leqslant j \leqslant l + k \\[2mm]

        \multirow{3}{*}{$M_4 \left(\Fqr{\ell_{ij}/2} \right)$} & \text{if } \zeta(n) +1\leqslant i \leqslant r \text{ and }  \mu(m) +1\leqslant j \leqslant l \\ & \text{if }\zeta(n) +1\leqslant i \leqslant r, \ l+1\leqslant j\leqslant l+k \text{ and } 2a_{ij} \, | \, \deg(f_i)\\ & \text{if }  r+1\leqslant i\leqslant r+s, \ \mu(m) +1\leqslant j \leqslant l \text{ and } 2a_{ij} \, | \, \deg(h_j) \\[2mm]
        
        M_4 \left(\Fqr{\ell_{ij}} \right) & \text{ otherwise }
        \end{cases}            
        \]
        and 
         \[
         d_{ij} = \begin{cases}
         \multirow{2}{*}{$2a_{ij}$} & \text{ if } 1\leqslant i\leqslant \zeta(n) \text{ and } \mu(m)+1\leqslant j\leqslant l+k\\ & \text{ if } \zeta(n)+1\leqslant i\leqslant r+s \text{ and } 1\leqslant j\leqslant \mu(m)\\[2mm] 
           \multirow{3}{*}{$\dfrac{a_{ij}}{2}$} & \text{if } \zeta(n) +1\leqslant i \leqslant r  \text{ and } \mu(m) +1\leqslant j \leqslant l \\ & \text{if }\zeta(n) +1\leqslant i \leqslant r, \ l+1\leqslant j\leqslant l+k \text{ and } 2a_{ij} \, \nmid \, \deg(f_i)\\ & \text{if } r+1\leqslant i\leqslant r+s, \  \mu(m) +1\leqslant j \leqslant l \text{ and } 2a_{ij} \, \nmid \, \deg(h_j)\\[2mm]
           a_{ij} & \text{otherwise}
        \end{cases}
        \]
    \end{corolenum} 
\end{corollary}

\begin{proof} $ $

%\noindent \fbox{\cref{corolCxD}}

$(i)$ We will use the decompositions given in $\eqref{eq:descomp_ciclic}$ and $\eqref{eq:descomp_diedric}$, for $ \Fq[C_a]$ and  $ \Fq[D_n]$ respectively. 
%\begin{gather*}
 %   \Fq[D_n] \cong \bigoplus_{i=1}^{r+s}A_i , \mbox{ \ \ where }  \quad
    %A_i= 
    %\begin{cases}
        %\Fq \oplus \Fq & \text{if } 1 \leqslant i \leqslant \zeta(n) \\[1mm]
        %M_2 \left(\Fqr{\deg(f_i)/2} \right) & \text{if } \zeta(n) +1\leqslant i \leqslant r \\[2mm]
         %M_2 \left(\Fqr{\deg(f_i)} \right) & \text{if } r+1 \leqslant i \leqslant r + s
   % \end{cases}. \\[1mm]
    %\Fq[C_a] \cong \bigoplus_{j=1}^{b} \Fqr{\deg{p_j}}.
% \end{gather*}
%From \cref{Wedderburntp}, 
Applying \cref{tpproperties} and the isomorphism given in (\ref{isom_direct_tens}),  we obtain that
 \begin{center}
\begin{tblr}{colspec={lcr}, cells={mode=dmath}}
    \Fq[D_n \times C_a] & \overset{\scriptscriptstyle\eqref{isom_direct_tens}}{\cong} &  \Fq[D_n] \otimes_{\Fq} \Fq[C_a]  \\
    
    & \cong & \left( \bigoplus_{i=1}^{r+s} A_i \right) \otimes_{\Fq} \left(\bigoplus_{j=1}^{b} \Fqr{\deg{(p_j)}} \right)  \\

& \overset{\scriptscriptstyle \text{Lem. \ref{tpproperties}}}{\cong} & \bigoplus_{i=1}^{r+s} \bigoplus_{j=1}^{b} \left(A_i \otimes_{\Fq}  \Fqr{\deg(p_j)}\right).  %\cong & \bigoplus_{i=1}^{r+s} \bigoplus_{j=1}^{b} \left( A_{ij}^{\oplus d_{ij}} \right).
\end{tblr}
\end{center} 

We will now study the tensor product $A_i \otimes_{\Fq}  \Fqr{\deg(p_j)}$ for the different values of $i$. %Firstly, let us denote $\ell_{ij} = \lcm(\deg(f_i), \deg(g_j))$.
\begin{itemize}
\item If $1 \leqslant i \leqslant \zeta(n)$, then $A_i = \Fq \oplus \Fq$, and \cref{tpfields,tpproperties:directsum} imply that

\[
A_i \otimes_{\Fq}  \Fqr{\deg(p_j)} = \left( \Fq \oplus \Fq \right)  \otimes_{\Fq}  \Fqr{\deg(p_j)} \cong \Fqr{\deg(p_j)} \oplus \Fqr{\deg(p_j)}
\]

Therefore, $A_{ij} = \Fqr{\deg(p_j)} \oplus \Fqr{\deg(p_j)}$. Moreover, since $f_i\in\{\mx-1,\mx+1\}$, it follows that $ a_{ij}= 1 = d_{ij}$.

\item If  $\zeta(n) +1\leqslant i \leqslant r$, then $A_i =  M_2 \left(\Fqr{\deg(f_i)/2} \right)$ and \cref{tpmatrices} implies that

\[
A_i \otimes_{\Fq}  \Fqr{\deg({p_j})} = M_2 \left(\Fqr{\deg(f_i)/2} \right)  \otimes_{\Fq}  \Fqr{\deg(p_j)} \cong  M_2 \left(\Fqr{\deg(f_i)/2} \otimes_{\Fq} \Fqr{\deg(p_j)} \right)
\]
Now,  we use Lemma \ref{lem:gcd}  and \cref{tpfields} to conclude this case:

\begin{itemize}
\item If $2 a_{ij} \, | \, \deg(f_i)$, then  
$$\begin{cases} \lcm \left(\frac{\deg(f_i)}{2}, \deg(p_j) \right) = \dfrac{\ell_{ij}}{2} \\ \gcd \left(\frac{\deg(f_i)}{2}, \deg(p_j) \right) = a_{ij} \end{cases}.$$ 
Therefore  $A_{ij} = M_2 \left(\Fqr{\ell_{ij}/2} \right)$ and $d_{ij} =a_{ij}$.

\item If $2 a_{ij} \, \nmid \, \deg(f_i)$, then 
$$\begin{cases} \lcm \left(\frac{\deg(f_i)}{2}, \deg(p_j) \right) = \ell_{ij} \\ \gcd \left(\frac{\deg(f_i)}{2}, \deg(p_j) \right) = \frac{a_{ij}}{2} \end{cases}.$$ 
Therefore, $A_{ij} = M_2 \left(\Fqr{\ell_{ij}} \right)$ and $d_{ij} = \frac{a_{ij}}{2}$.
\end{itemize}

\item If $r+1 \leqslant i \leqslant r + s$, then $A_i =  M_2 \left(\Fqr{\deg(f_i)} \right)$ and \cref{tpfields,tpmatrices} imply that
\[
A_i \otimes_{\Fq}  \Fqr{\deg(p_j)} = M_2 \left(\Fqr{\deg(f_i)} \right)  \otimes_{\Fq}  \Fqr{\deg(p_j)} \cong  M_2 \left(\Fqr{\deg(f_i)} \otimes_{\Fq} \Fqr{\deg(p_j)} \right) \cong   M_2 \left(\Fqr{\ell_{ij}} \right)^{\oplus a_{ij}}
\]

Therefore, $A_{ij} = M_2 \left(\Fqr{\ell_{ij}} \right)$ and $d_{ij} =a_{ij}$.
\end{itemize}

We can argue in a similar way to obtain the decompositions given in statements $(ii)$ and $(iii)$.
%\noindent \fbox{\cref{corolDxD}}
\end{proof}

\medskip

\begin{example}
Using again \cref{numberofcodes} we can compute now the number of group codes, for instance, of the group algebra $\mathbb{F}_3[D_4 \times C_5]$. Note that the corresponding Wedderburn-Artin's decomposition is $$\mathbb{F}_3[D_4 \times C_5] \cong 4\mathbb{F}_3 \oplus 4\mathbb{F}_{3^4} \oplus M_2(\mathbb{F}_3) \oplus M_2(\mathbb{F}_{3^4}).$$ Hence the number of $(D_4 \times C_5)$-codes over $\F_3$ is $2^4 \cdot 2^4 \cdot 6 \cdot 84 = 131072$. Analogously, since $$\mathbb{F}_3[D_4 \times D_4] \cong 16\mathbb{F}_3 \oplus 8M_2(\mathbb{F}_3) \oplus M_4(\mathbb{F}_3),$$ then the number of $(D_4 \times D_4)$-codes over $\F_3$ is $2^{16}\cdot 6^8 \cdot 212 = 23335966605312$.
\end{example}

\section{Some explicit constructions of codes involving dihedral groups}

\label{sec_examples}

The aim of this section is to show how the theoretical results stated in this paper can be applied.  Let $q=3$ and consider the group algebra $\mathbb{F}_3[D_4\times C_4]$, where $D_4=\langle x, y \, | \, x^4 =y^2= 1, y^{-1}xy = x^{-1} \rangle$ and $C_4=\langle z \, | \, z^4 = 1\rangle$. According to \cref{corolCxD},  the Wedderburn-Artin decomposition of $\mathbb{F}_3[D_4\times C_4]$ depends on the factorisation of the polynomial $\mx^4-1 $ in irreducible polynomials of $\mathbb{F}_3[\mx]$.  Following Notation \ref{notation} and Corollary \ref{corolCxD},  one has $\zeta(4)=2, \, r=b=3, \, s=0$ and $\mx^4-1=p_1p_2p_3=f_1f_2f_3$, where 
\begin{gather*}
f_1 = p_1 =  \mx-1 \\
f_2 = p_2 = \mx+1 \\
f_3 = p_3 = \mx^2+1.
\end{gather*}

The values of $a_{ij}, d_{ij}$ and $\ell_{ij}$ are organised in the following table:
\begin{center}
\begin{tblr}{colspec={cccc}, columns={mode=dmath}, vlines}
\hline
\diagbox{i}{j} & 1 & 2 & 3 \\ \hline
\SetCell[r=3]{m} 1 & a_{11} = 1 & a_{12} = 1 & a_{13} = 1 \\ 
& d_{11} = 1 & d_{12} = 1 & d_{13} = 1 \\
& \ell_{11} = 1 & \ell_{12} = 1 & \ell_{13} = 2 \\ \hline
\SetCell[r=3]{m} 2 & a_{21} = 1 & a_{22} = 1 & a_{23} = 1 \\
& d_{21} = 1 & d_{22} = 1 & d_{23} = 1 \\
& \ell_{21} = 1 & \ell_{22} = 1 & \ell_{23} = 2 \\ \hline
\SetCell[r=3]{m} 3 & a_{31} = 1 & a_{32} = 1 & a_{33} = 2 \\
& d_{31} = 1 & d_{32} = 1 & d_{33} = 1 \\
& \ell_{31} = 2 & \ell_{32} = 2 & \ell_{33} = 2 \\ \hline
\end{tblr}
\end{center}

Therefore, the blocks $A_{ij}$ are
\begin{center}
\begin{tblr}{colspec={cccc}, rows={mode=dmath}, vlines, hlines}
\diagbox{i}{j} & 1 & 2 & 3 \\
1 & \mathbb{F}_3 \oplus \mathbb{F}_3 & \mathbb{F}_3 \oplus \mathbb{F}_3 & \mathbb{F}_{3^2} \oplus \mathbb{F}_{3^2} \\
2 & \mathbb{F}_3 \oplus \mathbb{F}_3 & \mathbb{F}_3 \oplus \mathbb{F}_3 & \mathbb{F}_{3^2} \oplus \mathbb{F}_{3^2} \\
3 & M_2\left( \mathbb{F}_3 \right) & M_2\left( \mathbb{F}_3 \right) & M_2\left( \mathbb{F}_{3^2} \right) \\
\end{tblr}
\end{center}

The information in both tables yields, after a proper reordering of the summands, that $\mathbb{F}_3[D_4\times C_4]$ can be decomposed as:
\begin{equation}\label{descomp_D4xC4}
\mathbb{F}_3[D_4\times C_4] \overset{\scriptscriptstyle \psi}{\cong} 8\mathbb{F}_3 \oplus 4\mathbb{F}_{3^2} \oplus 2 M_2(\mathbb{F}_3) \oplus M_2(\mathbb{F}_{3^2}).
\end{equation}

We will also need the expression for the isomorphism $\psi$. The isomorphism $\psi_1$ between $\mathbb{F}_3[D_4]$ and its decomposition can be found in \cite{Brochero2015}, whereas the isomorphism $\psi_2$ between $\mathbb{F}_3[C_4]$ and its decomposition is a direct consequence of the Chinese Remainder Theorem. By proceeding as in the proof of \cref{Wedderburntp}, $\psi$ can be explicitly determined through the images of the generators of the product group algebra:
\begin{gather*}
\psi(x) = \left(1 , 1 , 1 , 1 , -1 , -1 , -1 , -1 , 1 , 1 , -1 , -1 , \begin{pmatrix} 0 & 1 \\ -1 & 0 \end{pmatrix} , \begin{pmatrix} 0 & 1 \\ -1 & 0 \end{pmatrix} , \begin{pmatrix} 0 & 1 \\ -1 & 0 \end{pmatrix} \right)\\[1mm]
\psi(y) =\left( 1 , -1 , 1 , -1 , 1 , -1 , 1 , -1 , 1 , -1 , 1 , -1 , \begin{pmatrix} 1 & 0 \\ 0 & -1 \end{pmatrix} , \begin{pmatrix} 1 & 0 \\ 0 & -1 \end{pmatrix} , \begin{pmatrix} 1 & 0 \\ 0 & -1 \end{pmatrix}\right) \\[1mm]
\psi(z) = \left(-1 , -1 , 1 , 1 , -1 , -1 , 1 , 1 , \xi^2 , \xi^2 , \xi^2 , \xi^2 , \begin{pmatrix} -1 & 0 \\ 0 & -1 \end{pmatrix} , \begin{pmatrix} 1 & 0 \\ 0 & 1 \end{pmatrix} , \begin{pmatrix} \xi^2 & 0 \\ 0 & \xi^2 \end{pmatrix}\right)
\end{gather*}
where $\xi$ a primitive element of $\mathbb{F}_{3^2}$ such that $\xi^2-\xi-1=0$. Now,  considering the decomposition of $\mathbb{F}_3[D_4\times C_4]$ given in $(\ref{descomp_D4xC4})$,  we take for instance $$\cC\cong \mathbb{F}_3 \oplus \mathbb{F}_3 \oplus \pmb{0} \oplus \mathbb{F}_3 \oplus \pmb{0} \oplus \mathbb{F}_3 \oplus \pmb{0} \oplus \pmb{0} \oplus \pmb{0} \oplus \mathbb{F}_{3^2} \oplus \mathbb{F}_{3^2} \oplus \mathbb{F}_{3^2} \oplus \pmb{0} \oplus M_2(\mathbb{F}_3) \oplus \left\langle \begin{pmatrix} 1 & \xi \\ 0 & 0 \end{pmatrix} \right\rangle_{M_2\left(\mathbb{F}_{3^2}\right)}.$$ Its dimension can be easily computed utilising \cref{code_dim}. Indeed, since
\begin{gather*}
\dim_{\mathbb{F}_3}(\mathbb{F}_3) = 1, \qquad \dim_{\mathbb{F}_3}(\mathbb{F}_{3^2}) = 2, \qquad \dim_{\mathbb{F}_3}\left(M_2(\mathbb{F}_3)\right) = 4, \qquad \dim_{\mathbb{F}_3}\left(\left\langle \begin{pmatrix} 1 & \xi \\ 0 & 0 \end{pmatrix} \right\rangle_{M_2\left(\mathbb{F}_{3^2}\right)}\right) = 4,
\end{gather*}
we can conclude that the dimension of $\mathcal{C}$ is 18. Using the software \textsc{Magma} \cite{Magma}, we were able to construct this concrete $\psi$ and its inverse, and so we computed a generator element of $\mathcal{C}$ in $\mathbb{F}_3[D_4\times C_4]$, which is $x+z-z^2+x^2+xy-xz+xz^2-x^3+yz^2-x^2y+z^3-x^2z^2-xyz^2+xz^3+x^3z-x^2yz^2+xyz^3+x^3yz-x^3z^3$. Moreover, we obtained its minimum distance, which is 8. We point out that $\mathcal{C}$ achieves the best known minimum distance for a $[32, 18]_3$ linear code according to Grassl's tables \cite{Grasslcodetables}.

Below we present other examples of group codes over $\mathbb{F}_3$ associated with the direct product of two specific groups, being one of them always a dihedral group. All are group codes with the best-known minimum distance for their length and dimension (cf. \cite{Grasslcodetables}). The results were obtained using \textsc{Magma} \cite{Magma} by random search.

Let $D_n=\langle x, y \, | \, x^{n} =y^2= 1, y^{-1}xy = x^{-1} \rangle$ and $C_t=\langle z \, | \, z^t = 1\rangle$. Consider the group algebra $\mathbb{F}_3[D_n\times C_t]$. The following table displays some of the codes we computed.

\begin{center}
\begin{tblr}[caption={Some good codes in $\mathbb{F}_3[D_n \times C_t]$, where $D_n=\langle a,b \, | \, a^n=b^2=1, bab=a^{-1}\rangle$ and $C_t=\langle c \, | \, c^t = 1 \rangle$}]{|Q[r, m, 0.57\textwidth]|c|c|c|c|c|}
\hline
    Generator of an ideal of $\mathbb{F}_3[D_n\times C_t]$ & $n$ & $t$ & Length & Dimension & Distance \\ \hline
     
%     $x-y+x^2-xy-xz-x^3+xyz+x^3y$  & 4 & 2 & 16 & 6 & 7 \\ \hline
%     
%     $1+x-y-z-x^2+xy-xz-x^3+yz-x^2z+xyz+x^2yz$ & 4 & 2 & 16 & 7 & 6 \\ \hline
%     
%     $1-x-y-x^2+xy+x^3+xyz-x^3y+x^3z$ & 4 & 2 & 16 & 10 & 4 \\ \hline
%     
%     $1-y-z-yz-xz-x^2z-x^3-x^2y+x^3z-x^4-x^2yz-xy+x^4z-xyz$ & 5 & 2 & 20 & 14 & 4 \\ \hline
%     
%     $-y+z-x+yz+xz-x^6yz+x^5y-x^2z-x^3+x^3z+x^4-x^4yz-x^3y+x^3yz+x^2y-x^5z+x^6-xy+xyz$ & 7 & 2 & 28 & 21 & 4 \\ \hline
%     
     $x+z-z^2+x^2+xy-xz+xz^2-x^3+yz^2-x^2y+z^3-x^2z^2-xyz^2+xz^3+x^3z-x^2yz^2+xyz^3+x^3yz-x^3z^3$ & 4 & 4 & 32 & 18 & 8 \\ \hline
     
%     $-1+x-xy-xz+x^3+yz^2+x^2y+x^2z-x^2z^2-x^3y-xz^3-x^3z-x^2yz-x^2yz^2+x^2z^3-x^3z^3-x^2yz^3$ & 4 & 4 & 32 & 25 & 4 \\ \hline
%     
     $y+z-x-yz+yz^2+x^4y+z^3+xz-yz^3-x^3y-x^2z-x^3-x^3yz+x^2y+x^2z^3-xy+x^3z^3+x^4z-x^4z^3-xyz^3$ & 5 & 4 & 40 & 21 & 10 \\ \hline
     
%     $1+z^2+x^4y-xz+xz^2-x^2-x^4yz+xz^3+x^2z-x^3-x^3yz^2-x^3z-x^3z^2+x^4+x^2yz+x^2yz^2-xy+x^4z+x^4z^2+xyz-x^4z^3+xyz^3$ & 5 & 4 & 40 & 22 & 9 \\ \hline
%
     $1-y-x+x^2-yz^2+x^2y-x^2z^2+xy-yz^4+xz^4-x^2z^4+x^3yz^4-yz-xz-x^3z^4-x^2z-x^3yz-xyz^4-yz^3-xz^3+x^3z-x^2z^3-x^3yz^3+xyz+x^3z^3+xyz^3$ & 4 & 5 & 40 & 25 & 8 \\ \hline

%     $1-y-z+x-yz+x^4y-z^3-xz+xz^2-yz^3-x^4yz-x^4yz^2-x^2z^2-x^2z^3-x^3z^2-x^4-x^3yz^3-xy+xyz^2$ & 5 & 4 & 40 & 25 & 8 \\ \hline
% 
%     $-1+y-x+z^2-x^2-yz^2-xz^2+z^4+xy-yz^4-xz^4+x^3z^2+z-xyz^2-yz+xz+x^3z^4+z^3-x^3yz-xyz^4-yz^3-x^3z+x^3yz^3+xyz^3$ & 4 & 5 & 40 & 26 & 7 \\ \hline
%
%     $-1-y-z-z^2+x-z^3-xz-xz^2+x^2+x^4yz-xz^3-x^2z+x^3yz^2-x^2y-x^3z+x^3z^2-x^3yz^3-x^2yz^2-xy+x^4z-x^4z^2-x^2yz^3-x^4z^3$ & 5 & 4 & 40 & 26 & 7 \\ \hline
%
%     $1-y-z+yz-yz^2+x^4y-z^3+x^2+yz^3+x^4yz-x^3y+xz^3+x^3+x^4yz^3-x^2y-x^2z^3+x^3z-x^4+x^2yz+x^3z^3-x^2yz^3+xyz$ & 5 & 4 & 40 & 28 & 6 \\ \hline
%
%    $-1-x^2-x^4yz^2-x^3y+x^2z^2-x^3-x^4yz^4-x^2y+x^4-x^4yz-x^3yz^4-x^2yz^2+xy+xz^3+x^2z+x^3z^4+x^4z^2+x^4z^4-x^3yz^3-x^2yz+x^3z^3+x^4z-xyz^3$ & 5 & 5 & 50 & 28 & 10 \\ \hline
%    
    $1+z^2-yz^2-x^4y+z^4-xz^2-yz^4-x^3y+z-xz^4+x^3-yz+x^3yz^2+z^3-xz-x^3z^2+x^4-yz^3+x^3yz^4-x^2z-x^3z^4+x^4yz^3+xyz^2+x^3yz^3+x^2yz+xyz^4-x^3z^3+xyz-x^4z^3$ & 5 & 5 & 50 & 29 & 10 \\\hline
    
%    $1+y+yz^2+yz^4+x^4yz^2+x^2z^2+yz+x^3yz^2+xz+x^2z^4-x^3z^2+yz^3+x^4yz-x^3yz^4-x^2yz^2-xz^3-x^2z+x^3z^4-x^4z^2-x^2yz^4+x^2z^3+x^3yz^3+x^2yz-x^3z^3-x^4z-x^2yz^3-xyz+xyz^3$ & 5 & 5 & 50 & 32 & 8 \\ \hline
%    
%    $y+xz^2+xz^4+x^2z^2-x^4yz-x^3yz^4-x^2yz^2-x^2z+x^3z^4-x^4yz^3+x^3yz-xyz^2-x^2z^3-x^3z+x^3yz^3+x^2yz-xyz^4-x^3z^3+x^4z+x^2yz^3+x^4z^3$ & 5 & 5 & 50 & 33 & 8 \\\hline
%    
%    $-1-y+x^4y-xz^2-x^2-x^4yz^2-xz^4+x^3yz^2+x^2y+xz+x^2z^4-x^3z^2-x^4-x^4yz+x^3yz^4-x^3z^4+x^4z^2-x^2yz^4+xyz^2-x^2z^3+xyz^4+x^4z+x^2yz^3-xyz$  & 5 & 5 & 50 & 37 & 6 \\ \hline    

\end{tblr}
\end{center}

Now, let $D_n=\langle x, y \, | \, x^{n} =y^2= 1, y^{-1}xy = x^{-1} \rangle$ and $D_m=\langle c,d\, | \, c^m=d^2=1, dcd=c^{-1}\rangle$, and consider the group algebra $\mathbb{F}_3[D_n\times D_m]$. Then we have the following table.

\begin{center}
\begin{tblr}[caption={Some good codes in $\mathbb{F}_3[D_n \times D_m]$, where $D_n=\langle x, y \, | \, x^n=y^2= 1, y^{-1}xy = x^{-1} \rangle$ and $C_t=\langle c \, | \, c^t = 1 \rangle$}]{|Q[r, m, 0.57\textwidth]|c|c|c|c|c|}
\hline
     Generator of an ideal of $\mathbb{F}_3[D_n\times D_m]$ & $n$ & $m$ & Length & Dimension & Distance \\ \hline
  
     $-x^4cd + x^4ycd + x^5cd - x^3yc + x^5c - x^3yd - x^2ycd - x^5ycd - x^6cd + x^3c - x^5yc + x^3d - x^2yd + x^6d  + x^2cd + x^6ycd - xyc + x^2d - x^6yd - ycd + xc + yc + yd + cd - y - c - d - 1$ & 7 & 2 & 56 & 31 & 11 \\ \hline
     
      $-1+x+d-x^2-xd-x^3-x^5+x^4y+cd-x^2c+x^4d-x^6-xyc+xyd+x^7y-x^5y+xcd+x^3c-x^5c+x^3d-x^6yc-x^4yc+x^6yd+x^4cd-x^6c+xycd-x^7yc-x^5yc-x^7yd+x^5yd-x^3y-x^5cd-x^7d-x^6ycd+x^6yd-x^7ycd-x^3yd+x^7cd+x^3ycd+x^2yd$ & 8 & 2 & 64 & 35 & 12 \\ \hline
      
       $-x^2yc^2d + x^2yc^3 + x^3c^2 + x^2ycd + x^3cd -x^2c^2d -xyc^2d -x^3yc^2d + x^2c^3 + xyc^3 + x^3yc^3 + x^3c + xyc^2 + x^3yc^2 - x^2cd - xycd + x^3d - x^2c^3d + yc^2d + yc^3 + x^2y + x^3 + x^2c + xyc -x^3yc + xc^2 - xcd + xyd - yc^3d - x^3y - xc - c^2 - y - 1$ & 4 & 4 & 64 & 44 & 8 \\ \hline
       
       $x^2yc^2d - x^3c^2d - x^2yc^3 - x^3c^3 - x^2yc^2 - x^3cd + x^2yc^3d - x^3c^3d + x^3yc^2d - x^3yc^3 - x^3c + x^2c^2 - xyc^2 + x^3yc^2 - xycd + x^3ycd + x^2c^3d - xyc^3d - yc^2d + xc^3 - x^3 + x^2c + x^3yc - xc^2 - yc^2 + xcd + ycd + x^2d - xyd + x^3yd - yc^3d + c^3 - x^2 + xy - x^3y + yc - c^2 + cd - xd - c^3d + y - c - d$ & 4 & 4 & 64 & 45 & 8 \\ \hline
     
\end{tblr}
\end{center}

\subsection{Application of group codes to quantum error-correcting codes}

Within quantum information processing, the ability of controlling quantum errors is fundamental in order to construct efficient quantum computers; in fact, the research area of quantum error control is growing tremendously in the last few years. Among the possible quantum error-correcting codes, there is a particular type of them which can be designed via classical linear codes, the so-called CSS quantum codes, whose (binary) construction is due to Calderbank, Shor and Steane (cf. \cite{CS, Steane}). Later on, this was generalised to non-binary alphabets (cf. \cite{CRSS, KKKS}). These classical codes are not arbitrary: there must verify certain orthogonality condition. In the current literature, this condition is sometimes checked using brute-force algorithms (see, for instance, \cite{Yu24}). However, group codes can play an important role in this framework, since a systematic method can be easily designed to construct classical codes that met the required orthogonality condition. In this section our objective is to illustrate such a method, and for that goal we first collect some basic concepts concerning CSS quantum codes; we refer readers interested in additional details to the book \cite{Lidar2013}.

Recall that, if $x=(x_1,x_2,...,x_n)$ and $y=(y_1,y_2,...,y_n)$ are two elements in $\mathbb{F}_q^n$, then their inner-product is defined as $x\cdot y=\sum_{i=1}^n x_i y_i$.  In this way,  given a linear code $\mathcal{C}$ of $\Fq^n$,   the (euclidean) {\em dual code} of $\mathcal{C}$ is $\mathcal{C}^{\perp}=\{y\in\mathbb{F}_q^n \; : \; x\cdot y =0, \;\forall \,  x\in\mathcal{C}\}$. 

Let $\mathbb{C}$ be the complex field and $V_n=\mathbb{C}^q\otimes \cdots \otimes \mathbb{C}^q=(\mathbb{C}^q)^{\otimes n}$ be the Hilbert space of dimension $q^n$. The CSS construction works as follows: given two linear codes $\mathcal{C}_1, \mathcal{C}_2\subseteq \mathbb{F}_q^n$ of dimensions $k_1$ and $k_2$ respectively, and such that  $\mathcal{C}_2^{\perp}\subseteq \mathcal{C}_1$,  it is possible to construct a \emph{quantum error-correcting code} (QECC) from them, denoted $\text{CSS}(\mathcal{C}_1, \mathcal{C}_2)$, which will be a subspace of $V_n$ of dimension $q^{k_1+k_2-n} $. Moreover, the minimum distance of $\text{CSS}(\mathcal{C}_1, \mathcal{C}_2)$ is $d$ whenever all errors acting on at most $d-1$ of the $n$ subsystems (tensor components) can be detected or act trivially on the code. We say then that  $\text{CSS}(\mathcal{C}_1, \mathcal{C}_2)$ is a $[[n,k_1+k_2-n,d]]_q$ quantum code. 

\begin{theorem}
\label{quantum_CSS}
If $\mathcal{C}_1$ and $\mathcal{C}_2$ are linear codes over $\mathbb{F}_q$ with parameters $[n,k_1,d_1]$ and $[n,k_2,d_2]$, respectively, such that $\mathcal{C}_2^{\perp}\subseteq\mathcal{C}_1$, then there exists a quantum error-correcting code $\mathcal{Q}=\operatorname{CSS}(\mathcal{C}_1, \mathcal{C}_2)$ with parameters $[[n,k_1+k_2-n,d_{\mathcal{Q}}]]_q$, where $d_{\mathcal{Q}}$ is the minimum of the weights of codewords lying in $(\mathcal{C}_1\smallsetminus\mathcal{C}_2^{\perp})\cup (\mathcal{C}_2\smallsetminus\mathcal{C}_1^{\perp})$ whenever $\mathcal{C}_2^{\perp}\subset \mathcal{C}_1$, and otherwise $d_{\mathcal{Q}}=\min(d_1,d_2)$.
\end{theorem}

Given a finite group $G$, recall that every $G$-code can be determined whenever the Wedderburn-Artin's decomposition of the group algebra is known, since each $G$-code is isomorphic to a direct sum of ideals in the corresponding matrix rings (see \cref{sec_groups}). Hence, if the corresponding dual code can also be computed, then we can utilise this information to obtain quantum error-correcting $G$-codes via the previous CSS construction. Nevertheless, up to our knowledge, few dualities have been obtained in terms of the Wedderburn-Artin's decomposition of some concrete non-abelian group algebras $\Fq[G]$; it seems that only Vedenev and Deundyak did it for the dihedral group algebra (see \cite[Theorem 5]{Vedenev2021}). In the next two examples we illustrate how their result can be used to provide a systematic construction of classical codes that verify the required orthogonality relation in Theorem \ref{quantum_CSS}.

\begin{example}
Let us consider $q=3$ and $n=16$, so we work on the group algebra $\mathbb{F}_3[D_{16}]$. Following the notation in \cref{group-algebra-decomposition}, it is clear that $\zeta(n)=2$, and it can be checked that $$\mx^{16}-1=(\mx-1)(\mx+1)(\mx^2+1)(\mx^2+\mx-1)(\mx^2-\mx-1)(\mx^4+\mx^2-1)(\mx^4-\mx^2-1)=f_1f_2f_3f_4f_4^*f_5f_5^*,$$ so $r=3$ and $s=2$. Consider $\beta$ a primitive element of $\mathbb{F}_{3^4}$. Take the $D_{16}$-code $$\cC\cong(\mathbb{F}_3\oplus \mathbb{F}_3)\oplus(\pmb{0}\oplus \mathbb{F}_3)\oplus  \left\langle \begin{pmatrix}
1&0\\0&1 \end{pmatrix} \right\rangle_{M_2(\mathbb{F}_{3})}\oplus  \left\langle \begin{pmatrix}
1&0\\0&0 \end{pmatrix} \right\rangle_{M_2(\mathbb{F}_{3^2})}\oplus  \left\langle \begin{pmatrix}
1&-\beta\\0&0 \end{pmatrix} \right\rangle_{M_2(\mathbb{F}_{3^4})}.$$ 
Note that $\mathcal{C}$ is a $D_{16}$-code of dimension $19$ (by \cref{code_dim}), and in virtue of \cite[Theorem 5]{Vedenev2021} we get that 
$$
\mathcal{C}^{\perp}\cong (\pmb{0} \oplus \pmb{0}) \oplus (\mathbb{F}_3 \oplus \pmb{0}) \oplus \left\langle \begin{pmatrix}
0&0\\0&0 \end{pmatrix} \right\rangle_{M_2(\mathbb{F}_{3})}\oplus  \left\langle \begin{pmatrix}
1&0\\0&0 \end{pmatrix} \right\rangle_{M_2(\mathbb{F}_{3^2})}\oplus  \left\langle \begin{pmatrix}
1&\beta\\0&0 \end{pmatrix} \right\rangle_{M_2(\mathbb{F}_{3^4})}.
$$ 
Note that with each $D_{16}$-code $\mathcal{D}\supseteq\mathcal{C}^{\perp}$ we can construct a dihedral quantum code applying the CSS construction (cf. \cref{quantum_CSS}).   We consider for instance: 
$$
\mathcal{D}\cong (\mathbb{F}_3\oplus \mathbb{F}_3) \oplus (\mathbb{F}_3 \oplus \pmb{0}) \oplus \left\langle \begin{pmatrix}
1&0\\0&1 \end{pmatrix} \right\rangle_{M_2(\mathbb{F}_{3})}\oplus  \left\langle \begin{pmatrix}
1&0\\0&1 \end{pmatrix} \right\rangle_{M_2(\mathbb{F}_{3^2})}\oplus  \left\langle \begin{pmatrix}
1&\beta\\0&0 \end{pmatrix} \right\rangle_{M_2(\mathbb{F}_{3^4})}.
$$ It holds that $\mathcal{D}$ is a $D_{16}$-code of dimension $23$. By  \cref{quantum_CSS} we obtain that ${\cal Q}_1=\operatorname{CSS}(\mathcal{D}, \mathcal{C})$ is a quantum $D_{16}$-code of length $32$ and dimension $3^{10}$. Making computations with \textsc{Magma} \cite{Magma} we also obtained its minimum distance which is $4$, so its parameters are $[[32,10,4]]_{3}$. 
\end{example}

\setlength{\arraycolsep}{4pt}

\begin{example}
Now we consider the group algebra $\mathbb{F}_3[D_{20}]$, so $n=20$ and $q=3$. Thus $\zeta(n)=2$, and since \begin{eqnarray*}\mx^{20}-1&=&(\mx-1)(\mx+1)(\mx^2+1)(\mx^4+\mx^3+\mx^2+\mx+1) \cdot \\ & \cdot &  (\mx^4-\mx^3+\mx^2-\mx+1)(\mx^4+\mx^3-\mx+1)(\mx^4-\mx^3+\mx+1)= \\ &=&f_1f_2f_3f_4f_5f_6f_6^*,\end{eqnarray*} 
we deduce that $r=5$ and $s=1$.  For $i=1,2$, let $\alpha_i$ be a root of $f_i$.  By 
\cite[Remark 3.2]{Brochero2015} there exists a polynomial $h_i$ in $\mathbb{F}_3[\mx]$ having degree $2$ and  $\alpha_i+\alpha_i^{-1}$ as a root. Thus, we can take 
\[
\lambda_4=-2/(\alpha_4+\alpha_4^{-1})\in \mathbb{F}_3(\alpha_4+\alpha_4^{-1})\cong  \mathbb{F}_{3^2}
\] 
and
\[
\lambda_5=-(\alpha_5+\alpha_5^{-1})/2\in \mathbb{F}_3(\alpha_5+\alpha_5^{-1})\cong  \mathbb{F}_{3^2}.
\] 
Let us take the $D_{20}$-code 
\begin{eqnarray*}\mathcal{C}\cong(\mathbb{F}_3\oplus \mathbb{F}_3)&\oplus&(\pmb{0}\oplus \mathbb{F}_3)\oplus  \left\langle \begin{pmatrix}
1&0\\0&1 \end{pmatrix} \right\rangle_{M_2(\mathbb{F}_{3})}\oplus  \left\langle \begin{pmatrix}
1&\lambda_4\\0&0 \end{pmatrix} \right\rangle_{M_2(\mathbb{F}_{3}(\alpha_4+\alpha_4^{-1}))}
\oplus \\&\oplus&  \left\langle \begin{pmatrix}
1&\lambda_5\\0&0 \end{pmatrix} \right\rangle_{M_2(\mathbb{F}_{3}(\alpha_5+\alpha_5^{-1}))}\oplus \left\langle \begin{pmatrix}
1&0\\0&0 \end{pmatrix} \right\rangle_{M_2(\mathbb{F}_{3^4})}.\end{eqnarray*} 
By Theorem \ref{code_dim} we get that $\mathcal{C}$ is a $D_{20}$-code of dimension $23$, and applying \cite[Theorem 5]{Vedenev2021} we deduce
\begin{eqnarray*}\mathcal{C}^{\perp}\cong(\pmb{0}\oplus \pmb{0})&\oplus&(\mathbb{F}_3 \oplus \pmb{0})\oplus  \left\langle \begin{pmatrix}
0&0\\0&0 \end{pmatrix} \right\rangle_{M_2(\mathbb{F}_{3})}\oplus  \left\langle \begin{pmatrix}
1&0\\0&0 \end{pmatrix} \right\rangle_{M_2(\mathbb{F}_{3}(\alpha_4+\alpha_4^{-1}))}\oplus \\&\oplus&  \left\langle \begin{pmatrix}
0&1\\0&0 \end{pmatrix} \right\rangle_{M_2(\mathbb{F}_{3}(\alpha_5+\alpha_5^{-1}))}\oplus \left\langle \begin{pmatrix}
1&0\\0&0 \end{pmatrix} \right\rangle_{M_2(\mathbb{F}_{3^4})}.\end{eqnarray*}
As in the previous example,  any $D_{20}$-code containing $\mathcal{C}^{\perp}$ allows us to apply the CSS construction in order to obtain a dihedral quantum code.  More concretely,  we choose the following $D_{20}$-code $\mathcal{D}$ such that $\mathcal{C}^{\perp}\subseteq \mathcal{D}$:
 \begin{eqnarray*}\mathcal{D}\cong(\mathbb{F}_3\oplus \mathbb{F}_3)&\oplus&(\mathbb{F}_3 \oplus \pmb{0})\oplus  \left\langle \begin{pmatrix}
0&1\\0&0 \end{pmatrix} \right\rangle_{M_2(\mathbb{F}_{3})}\oplus  \left\langle \begin{pmatrix}
1&0\\0&1 \end{pmatrix} \right\rangle_{M_2(\mathbb{F}_{3}(\alpha_4+\alpha_4^{-1}))}\oplus \\&\oplus&  \left\langle \begin{pmatrix}
0&1\\0&0 \end{pmatrix} \right\rangle_{M_2(\mathbb{F}_{3}(\alpha_5+\alpha_5^{-1}))}\oplus \left\langle \begin{pmatrix}
1&0\\0&1 \end{pmatrix} \right\rangle_{M_2(\mathbb{F}_{3^4})}.\end{eqnarray*} It holds that $\mathcal{D}$ is a $D_{20}$-code of dimension $3^{33}$. Consequently,  by  \cref{quantum_CSS}, and through computations with \textsc{Magma} \cite{Magma} to obtain the distance, we conclude that ${\cal Q}_2=\operatorname{CSS}({\cal D}, {\cal C})$ is a quantum CSS $D_{20}$-code with parameters $[[40,16,4]]_{3}$. 
\end{example}

Unfortunately, all our numerical experiments did not achieve in any case optimal QECCs according to the database \url{http://quantumcodes.info}. Nevertheless, the literature of quantum codes suggests that this usually occurs with euclidean dualities, while hermitian dualities (with the corresponding CSS construction) may provide quantum codes with better parameters, possibly due to the larger field size. In fact, in a forthcoming paper (\cite{Nostre2}), among other things we compute the hermitian duality of $D_n$ codes over $\F_{q^2}$ and we do obtain optimal quantum dihedral codes based on this methodical strategy.

%\section{Conclusions}

% It might be interesting to consider and analyze decoding algorithms to this particular class of codes.

% have not yet been considered in this generality 

%Bibliografía
\renewcommand*{\mkbibcompletename}[1]{\textsc{#1}}
%\nocite{*}
\printbibliography
\end{document}